\newtheorem{Theorem}{Theorem}
\newtheorem{Definition}[Theorem]{Definition}
\newtheorem{Lemma}[Theorem]{Lemma}
\newtheorem{Proposition}[Theorem]{Proposition}
\newtheorem{Corollary}[Theorem]{Corollary}
\newtheorem{Remark}[Theorem]{Remark}
\newtheorem*{Example}{Example}
\newcommand{\N}{\mathbb{N}}
\newcommand{\LL}{\mathsf{LL}}
\DeclareMathOperator{\sing}{\mathsf{Sing}}
\def\alphabet{\Sigma}
\begin{document}

\title{On the binomial equivalence classes of finite words}
\author{Marie Lejeune}
\thanks{The first author is supported by a FNRS fellowship}
\author{Michel Rigo}
\author{Matthieu Rosenfeld}
\email{\{M.Lejeune,M.Rigo\}@uliege.be; matthieu.rosenfeld@gmail.com}
\address{University of Li\`ege, 
Dept. of Mathematics, 
All\'ee de la d\'ecouverte 12 (B37), 
B-4000 Li\`ege, 
Belgium}

\begin{abstract}
Two finite words $u$ and $v$ are $k$-binomially equivalent if, for each word $x$ of length at most $k$, $x$ appears the same number of times as a subsequence (i.e., as a scattered subword) of both $u$ and $v$. This notion generalizes abelian equivalence. In this paper, we study the equivalence classes induced by the $k$-binomial equivalence with a special focus on the cardinalities of the classes. We provide an algorithm
generating the $2$-binomial equivalence class of a word. For $k\ge 2$ and alphabet of $3$ or more symbols, the language made  of lexicographically least elements  of every $k$-binomial equivalence class and the language of singletons, i.e., the words whose $k$-binomial equivalence class is restricted to a single element, are shown to be non context-free. As a consequence of our discussions,  we also prove that the submonoid generated by the generators of the free nil-$2$ group on $m$ generators is isomorphic to the quotient of the free monoid $\{1,\ldots,m\}^*$ by the $2$-binomial equivalence.
\end{abstract}

\maketitle

\smallskip
\noindent \textbf{Keywords:} combinatorics on words, context-free languages, binomial coefficients, $k$-binomial equivalence, nil-$2$ group

\smallskip
\noindent \textbf{68R15, 68Q45, 05A05, 20F18}

\section{Introduction}

Let $\alphabet$ be a totally ordered alphabet of the form $\{1<\cdots<m\}$. We make use of the same notation $<$ for the induced lexicographic order on $\alphabet^*$. 

Let $\sim$ be an equivalence relation on $\alphabet^*$. The equivalence class of the word $w$ is denoted by $[w]_\sim$. We will be particularly interested in two types of subsets of $\alphabet^*$ with respect to $\sim$. We let 
$$\LL(\sim,\alphabet)=\left\{w\in\alphabet^*\mid \forall u\in [w]_\sim : w\leq u \right\}$$
denote the language of lexicographically least elements of every equivalence class for $\sim$. So there is a one-to-one correspondence between $\LL(\sim,\alphabet)$ and $\alphabet^*\!/\!\sim$. 
We let
$$\sing(\sim,\alphabet)=\left\{w\in\alphabet^*\mid \# [w]_\sim=1\right\}$$
denote the language made of the so-called {\em $\sim$-singletons}, i.e., the elements whose equivalence class is restricted to a single element. Clearly,  we have $\sing(\sim,\alphabet)\subseteq\LL(\sim,\alphabet)$. In the extensively studied context of Parikh matrices (see Section~\ref{sec:22}), two words are {\em $M$-equivalent} if they have the same Parikh matrix. In that setting, singletons are usually called {\em $M$-unambiguous words} and have attracted the attention of researchers, see, for instance, \cite{Salomaa} and the references therein. 
\medskip

Let $k\ge 1$ be an integer. Let $\sim_{k,ab}$ be the $k$-abelian equivalence relation introduced by Karhum\"aki \cite{Kar}. Two words are {\em $k$-abelian equivalent} if they have the same number of factors of length at most $k$. 
If $k=1$, the words are {\em abelian equivalent}. We denote by $\Psi(u)$ the {\em Parikh vector} of the finite word~$u$, defined as
$$
\Psi(u) = \left( |u|_1, \ldots, |u|_m \right),
$$
where $|u|_a$ is the number of occurrences of the letter $a$ in $u$.
Two words $u$ and $v$ are abelian equivalent if and only if $\Psi(u) = \Psi(v)$. 

The $k$-abelian equivalence relation has recently received a lot of attention, see, for instance, \cite{KS1,KS2}. In particular, the number of $k$-abelian singletons of length $n$ is studied in \cite{singleton}. Based on an operation of $k$-switching, the following result is given in \cite{CKPW}.
\begin{Theorem}\label{thm:abel}
Let $k\ge 1$. Let $\alphabet$ be a $m$-letter alphabet. 
    For the $k$-abelian equivalence, the two languages $\LL({\sim_{k,ab}},\alphabet)$ and $\sing({\sim_{k,ab}},\alphabet)$ are regular. 
\end{Theorem}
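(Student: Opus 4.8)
The plan is to translate everything into the de Bruijn graph $G=G_{k,\alphabet}$ whose vertices are the words of $\alphabet^{k-1}$ and whose edges are the words of $\alphabet^{k}$ (the edge $a_1\cdots a_k$ going from $a_1\cdots a_{k-1}$ to $a_2\cdots a_k$). There are only finitely many words of length $<k$, and each of them is a $\sim_{k,ab}$-singleton and trivially lexicographically least, so I would dispose of them by hand and assume $|w|\ge k$. For such a $w$, reading its successive length-$k$ factors turns $w$ into a walk in $G$ from the vertex given by its prefix of length $k-1$ to the vertex given by its suffix of length $k-1$, and the multiset of edges traversed is exactly the multiset of length-$k$ factors of $w$. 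The characterization of $k$-abelian equivalence then reads: $u\sim_{k,ab}v$ if and only if the associated walks have the same endpoints and the same edge-multiset. Hence $[w]_{\sim_{k,ab}}$ is in bijection with the family of all Eulerian-type trails of the fixed multigraph $G_w$ between its two fixed endpoints, and both $\LL$ and $\sing$ become questions about this family of trails.

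The engine of the proof is the $k$-switching, whose graph-theoretic meaning is the exchange of two consecutive closed sub-walks based at a common vertex: if a walk reaches a vertex $z$, performs a loop $\ell_1$ back to $z$, then a loop $\ell_2$ back to $z$, then continues, one may swap the two loops. This preserves the endpoints and the edge-multiset, hence the $\sim_{k,ab}$-class, and on the word it is the replacement of a contiguous factor $z\,\ell_1\,z\,\ell_2\,z$ by $z\,\ell_2\,z\,\ell_1\,z$ (the boundary length-$k$ factors are unaffected, since the window keeps the prefix and suffix $z$). The technical heart I would have to establish is a boundedness statement: every closed sub-walk decomposes into simple cycles, and by an adjacent-transposition (``bubble-sort'') argument the whole family of trails is connected using only swaps of two \emph{adjacent simple cycles} at a common vertex. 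Since a simple cycle of $G$ has length at most $m^{k-1}$, these generating switchings are contiguous factor-rewrites acting inside a window of bounded length $C=C(k,m)$.

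Granting bounded switching, regularity follows from a forbidden-factor description. Call a factor of length at most $C$ a \emph{swappable window} if it admits a nontrivial simple-cycle switching producing a different factor with the same endpoints and the same edge-multiset; there are only finitely many such windows. Because the class is connected by these bounded swaps, $\#[w]_{\sim_{k,ab}}>1$ forces the first move on any path out of $w$ to be already applicable to $w$ itself, so $w$ is a singleton if and only if it contains no swappable window; thus $\sing(\sim_{k,ab},\alphabet)$ is defined by the avoidance of a finite set of factors and is regular. For $\LL$ I would mark a window as \emph{decreasing} when the switching strictly lowers it for the lexicographic order, and take $\LL(\sim_{k,ab},\alphabet)$ to be the words containing no decreasing window. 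The delicate point, and the step I expect to be the main obstacle, is the exchange argument showing that local optimality implies global optimality: a word admitting no decreasing bounded swap must genuinely be the lexicographically least element of its class, i.e. there are no ``valleys'' forcing a temporary increase (exactly as in the abelian case $k=1$, where a word with no adjacent descent is already sorted). Proving this, together with the bounded-simple-cycle generation lemma, is where the real work lies; once both are in place, the two languages are in fact locally testable, a fortiori regular.
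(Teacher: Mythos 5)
The paper does not actually prove this theorem: it is quoted from \cite{CKPW}, whose proof is indeed built on the $k$-switching operation you invoke. The difficulty is that you have replaced that operation by a strictly weaker one, and the weakening is fatal. The genuine $k$-switching exchanges two segments of the walk that both run from an occurrence of a length-$(k-1)$ word $X$ to an occurrence of a length-$(k-1)$ word $Y$, where $X$ and $Y$ may be \emph{different} vertices and the two segments may be arbitrarily long and arbitrarily far apart; your operation is the special case $X=Y$ with the two closed sub-walks adjacent. Your ``technical heart'' --- that each class is connected by swaps of two adjacent (simple) cycles at a common vertex --- is false. Take $k=2$ and $\alphabet=\{1,2,3\}$: the words $13212$ and $12132$ form a complete $\sim_{2,ab}$-class of size two (same first and last letter, each of $12,21,13,32$ occurring exactly once as a factor, and the associated multigraph admits exactly these two Eulerian trails from $1$ to $2$). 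Yet in the walk $1,3,2,1,2$ the vertices $1$ and $2$ are each visited only twice and $3$ only once, so no vertex supports two adjacent closed sub-walks and your operation cannot move at all. Under your forbidden-window criteria, $13212$ contains no swappable and no decreasing window, so it would be declared both a singleton and lexicographically least, whereas in fact $\#[13212]_{\sim_{2,ab}}=2$ and $12132<13212$.

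The same example shows that the intended conclusion that the two languages are locally testable is out of reach: the move that does connect this class swaps the segments $132$ and $12$ \emph{across} the intervening return path $21$, and in general the two swapped segments of a $k$-switching are unbounded and unboundedly separated, so membership cannot be decided by avoidance of a finite set of contiguous factors. (The regular expression the paper gives for $\sing(\sim_2,\{1,2\})$, forbidding two \emph{separated} occurrences of $12$ and $21$, already illustrates that the relevant conditions are non-local.) This is why the argument in \cite{CKPW} is substantially more delicate: the existence of a nontrivial switching is detected by an automaton that records which length-$(k-1)$ factors have occurred and in which relative configuration --- a bounded but non-local amount of information --- and the statement for $\LL$, i.e.\ your deferred ``local optimality implies global optimality'' step, needs its own separate argument on top of that. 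The proposal is not repairable by enlarging the window $C$; the switching operation itself must be replaced, and once it is, the forbidden-factor strategy no longer applies.
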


As discussed in Section~\ref{sec:22}, the set of $M$-unambiguous words over a $2$-letter alphabet is also known to be regular. Motivated by this type of results, we will consider another equivalence relation, namely the $k$-binomial equivalence introduced in \cite{RigoSalimov:2015}, and study the corresponding sets $\LL$ and $\sing$. 

\begin{Definition}
    We let the {\em binomial coefficient} $\binom{u}{v}$ denote the number of times $v$ appears as a (not necessarily contiguous) subsequence of~$u$. Let $k\ge 1$ be an integer. Two words $u$ and $v$ are {\em $k$-binomially equivalent}, denoted $u\sim_k v$, if $\binom{u}{x}=\binom{v}{x}$ for all words $x$ of length at most $k$.
\end{Definition}

We will show that $k$-abelian and $k$-binomial equivalences have incomparable properties for the corresponding languages $\LL$ and $\sing$. These two equivalence are both a refinement of the classical abelian equivalence and it is interesting to see how they differ. As mentioned by Whiteland in his Ph.D. thesis: ``part of the challenges in this case follow from the property that a modification in just one position of a word can have global effects of the distribution of subwords, and thus the structure of the equivalence classes.'' \cite{Whit}.

This paper is organized as follows. The special case of $2$-binomial equivalence over a $2$-letter alphabet is presented in Section~\ref{sec:22}: the corresponding languages are known to be regular. In Section~\ref{sec:3}, we discuss an algorithm generating the $2$-binomial equivalence class of any word over an arbitrary alphabet. Then we prove that the submonoid generated by the generators of the free nil-$2$ group on $m$ generators is isomorphic to $\{1,\ldots,m\}^*\!/\!\sim_2$. Section~\ref{sec:4} is about the growth rate of $\#(\alphabet^n\!/\!\sim_k)$.  As a consequence of Sections~\ref{sec:3} and \ref{sec:4},  
the growth function for the submonoid generated by the generators of the free nil-$2$ group on $m\ge 3$ generators is in $\Theta\left(n^{m^2-1}\right)$. In the last section, contrasting with Theorem~\ref{thm:abel}, we show that $\LL(\sim_k,\alphabet)$ and $\sing(\sim_k,\alphabet)$ are rather complicated languages when $k\ge 2$ and $\#\alphabet\ge 3$: they are not context-free.

\section{$2$-binomial equivalence over a $2$-letter alphabet}\label{sec:22}

Let $\alphabet=\{1,2\}$ be a $2$-letter alphabet. Recall that the {\em Parikh matrix} associated with a word $w\in\{1,2\}^*$ is the $3\times 3$ matrix given by
$$
P(w)=\begin{pmatrix}
    1 & |w|_{1} & \binom{w}{12} \\
    0 & 1 &  |w|_{2} \\
    0 & 0 & 1 \\
\end{pmatrix}.$$
For $a,b\in\{1,2\}$, $\binom{w}{ab}$ can be deduced from $P(w)$. Indeed, we have
$\binom{w}{aa}=\binom{|w|_a}{2}$ and if $a\neq b$, 
\begin{equation}
    \label{eq:remobv}
\binom{w}{aa}+\binom{w}{ab}+\binom{w}{ba}+\binom{w}{bb}=\binom{|w|_a+|w|_b}{2}.    
\end{equation}
It is thus clear that $w\sim_2 x$ if and only if $P(w)=P(x)$. We can therefore make use of the following theorem of Foss\'e and Richomme \cite{FR}. 
If two words $u$ and $v$ over an arbitrary alphabet $\alphabet$ can be factorized as $u=xabybaz$ and $v=xbayabz$ with $a,b \in \alphabet$, we write $u\equiv_2 v$. The reflexive and transitive closure of this relation is denoted by $\equiv_2^*$. 

\begin{Theorem}
\label{thm:fosse}
 Let $u,v$ be two words over $\{1,2\}$. The following assertions are equivalent:
    \begin{itemize}
      \item the words $u$ and $v$ have the same Parikh matrix;
      \item the words $u$ and $v$ are $2$-binomially equivalent;
      \item $u\equiv_2^* v$.
    \end{itemize}
\end{Theorem}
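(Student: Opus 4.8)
The plan is to add the third bullet to an equivalence that the discussion preceding the statement has already half-established. Indeed, the text just before the theorem records that $w\sim_2 x$ holds if and only if $P(w)=P(x)$, so the equivalence of the first two bullet points is immediate: over $\{1,2\}$ the Parikh matrix stores exactly $|w|_1$, $|w|_2$ and $\binom{w}{12}$, and from these data one recovers every $\binom{w}{x}$ with $|x|\le 2$ via $\binom{w}{aa}=\binom{|w|_a}{2}$ and~\eqref{eq:remobv}, while conversely $2$-binomial equivalence trivially fixes the three entries. The real content is therefore to show that $P(u)=P(v)$ if and only if $u\equiv_2^*v$.

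For the implication $u\equiv_2^*v\Rightarrow P(u)=P(v)$ (the easy direction) I would first check that a single elementary step $u=xabybaz\mapsto v=xbayabz$ preserves the Parikh matrix and then invoke that $\equiv_2^*$ is its reflexive--transitive closure. An elementary step clearly fixes $|u|_1$ and $|u|_2$, so only $\binom{u}{12}$ must be controlled. The key observation is that replacing an adjacent factor $cd$ by $dc$ changes this coefficient by exactly $[cd=12]-[dc=12]\in\{-1,0,+1\}$: all ordered pairs of positions involving at most one of the two swapped slots contribute the same total before and after, since the two slots carry the same unordered pair of letters, and only the internal pair flips. An elementary $\equiv_2$ step performs two such transpositions, $ab\to ba$ and $ba\to ab$, whose contributions are opposite and hence cancel, so $\binom{u}{12}=\binom{v}{12}$. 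Because the pattern is flanked by arbitrary words $x$ and $z$, the relation $\equiv_2$ (and thus $\equiv_2^*$) is a two-sided congruence, a fact I would record here for use in the converse.

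The substantive direction is $P(u)=P(v)\Rightarrow u\equiv_2^*v$, and I would prove it by exhibiting a canonical representative of each class and driving every word to it. Writing $p=|w|_1$, $q=|w|_2$ and $r=\binom{w}{12}$, one has $0\le r\le pq$; dividing $r=cq+d$ with $0\le d<q$ I propose the canonical word $1^{c}2^{q-d}12^{d}1^{p-c-1}$ (and $1^{p}2^{q}$ when $r=pq$), whose Parikh data are $(p,q,r)$ by a one-line count. It then suffices to show that any $w$ with the same $(p,q,r)$ can be reduced to this form by $\equiv_2$ steps, for then $u$ and $v$ reach a common word and are $\equiv_2^*$-equivalent by symmetry and transitivity. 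The natural way to organize the reduction is to read an elementary step as moving one chosen occurrence of $1$ one place to the right across a $2$ while simultaneously moving a later occurrence of $1$ one place to the left across a $2$; this keeps the positions-sum of the $1$'s fixed (equivalently, for fixed $p,q$, fixes $r$) while strictly decreasing a suitable potential, so that the process terminates at the canonical configuration.

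The main obstacle is exactly this last reduction. A lone adjacent transposition already changes $\binom{\cdot}{12}$, so every useful move must pair an ``up'' swap with a compensating ``down'' swap, and the work lies in verifying that such balanced moves suffice to connect all words with a given $(p,q,r)$. This is a connectivity statement on a level set of $\binom{\cdot}{12}$ and demands a genuine termination or monotonicity argument together with the claim that the minimal-potential configuration is precisely the canonical word above, rather than the purely local bookkeeping used for the easy direction.
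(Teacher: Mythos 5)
You should first be aware that the paper does not prove this statement itself: the equivalence of the first two items is established in the two sentences preceding the theorem (exactly as in your first paragraph, via $\binom{w}{aa}=\binom{|w|_a}{2}$ and \eqref{eq:remobv}), and the full three-way equivalence is then quoted from Foss\'e and Richomme \cite{FR}. Your easy direction ($u\equiv_2^* v \Rightarrow P(u)=P(v)$) is correct: the letter counts are obviously preserved, and the two transpositions contained in one elementary step contribute opposite increments to $\binom{\cdot}{12}$, hence cancel. (Minor slip: the increment caused by a single adjacent swap $cd\mapsto dc$ is $[dc=12]-[cd=12]$, new minus old, not the reverse; this does not affect the cancellation.)

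The genuine gap is in the direction $P(u)=P(v)\Rightarrow u\equiv_2^* v$, and you acknowledge it yourself. Your canonical word $1^{c}2^{q-d}12^{d}1^{p-c-1}$ does realize every admissible triple $(p,q,r)$ with $r=cq+d$, and the observation that every nontrivial $\equiv_2$-move preserves the position-sum of the $1$'s (equivalently, preserves $r$ for fixed $p,q$) is correct. But the whole content of this implication is the connectivity claim you defer: that from any non-canonical $w$ with data $(p,q,r)$ one can locate \emph{disjoint adjacent} occurrences of $12$ and $21$, in the correct relative order, whose joint swap strictly decreases a potential whose unique minimum on the level set $\{w:\Psi(w)=(p,q),\ \binom{w}{12}=r\}$ is the canonical word. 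You define neither the potential nor the move-selection rule, and the availability of a decreasing move is not automatic: some canonical words (e.g.\ $122121$) themselves admit $\equiv_2$-moves, so the argument must separate ``admits a move'' from ``admits a decreasing move''; moreover the words admitting no move at all are exactly the singletons $1^*2^*+2^*1^*+1^*21^*+2^*12^*+1^*212^*+2^*121^*$, and one must verify that each of these coincides with the canonical word of its own class. As written, the reduction step restates the theorem rather than proving it; to close the argument you would need to carry out this case analysis (or simply invoke \cite{FR}, as the paper does).
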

Consequently, the language $\sing({\sim_{2}},\{1,2\})$ avoiding two separate occurrences of $12$ and $21$ (or, $21$ and $12$) is regular. A regular expression for this language is given by
$$1^*2^*+2^*1^*+1^*21^*+2^*12^*+1^*212^*+2^*121^*.$$
A NFA accepting $\sing({\sim_{2}},\{1,2\})$ was given in \cite{Salomaa}.

\begin{Remark}
From \cite{RigoSalimov:2015}, we know that $$\# \LL({\sim_{2}},\{1,2\})= \# \left(\{1,2\}^n/\!\sim_2 \right)=\frac{n^3+5n+6}{6}.$$
Note that this is exactly the sequence {\tt A000125} of {\em cake numbers}, i.e., the  maximal number of pieces resulting from $n$ planar cuts through a cube.
\end{Remark}

\begin{Proposition}
    The language $\LL({\sim_{2}},\{1,2\})$ is regular.
\end{Proposition}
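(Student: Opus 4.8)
The plan is to leverage Theorem~\ref{thm:fosse}, which identifies $\sim_2$ on $\{1,2\}^*$ with the relation $\equiv_2^*$, and to read off from it the lexicographically \emph{decreasing} elementary moves. An application of $\equiv_2$ rewrites $x\,ab\,y\,ba\,z$ into $x\,ba\,y\,ab\,z$; taking $a=2,b=1$ this replaces $x\,21\,y\,12\,z$ by $x\,12\,y\,21\,z$, which is strictly smaller since the two words first disagree at the flipped pair $21>12$. Thus a single move strictly decreases a word exactly when it destroys an occurrence of the factor $21$ sitting (disjointly) \emph{before} a later occurrence of the factor $12$. I would then propose the candidate regular language
$$
L \;=\; 1^*2^* \;+\; 1^*\,2^+\,1\,2^*\,1^*,
$$
where $2^+$ denotes one or more letters $2$, i.e. $L=\{1^s2^t\}\cup\{1^s2^t\,1\,2^a1^b : t\ge 1\}$, and aim to prove $\LL(\sim_2,\{1,2\})=L$.

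For the inclusion $\LL(\sim_2,\{1,2\})\subseteq L$, I would argue that a lexicographically least word admits no decreasing move (otherwise its class would contain a strictly smaller word), hence contains no factor $21$ lying disjointly before a factor $12$; formally, there is no pair of positions $i+2\le j$ with $w_iw_{i+1}=21$ and $w_jw_{j+1}=12$. The combinatorial heart of this step is the lemma that a binary word avoiding this configuration is precisely an element of $L$. I would prove it by a run analysis: if $w$ has no factor $21$ then $w\in 1^*2^*$; otherwise, letting $d$ be the position of the first $21$, the prefix $w_1\cdots w_d$ has no $21$ and ends in $2$, so it equals $1^s2^t$ with $t\ge 1$, while the hypothesis forbids any $12$ at a position $\ge d+2$, forcing $w_{d+2}\cdots w_n\in 2^*1^*$; since $w_{d+1}=1$ this yields $w=1^s2^t\,1\,2^a1^b$. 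The delicate point I expect here is the overlap subtlety around the pattern $212$: the $21$ at position $d$ and a $12$ at position $d+1$ share the letter $w_{d+1}$ and so are \emph{not} disjoint, which is why such junctions are allowed and why the characterization excludes only separated occurrences. I would check both directions of the lemma carefully against this distance-$\ge 2$ requirement.

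For the reverse inclusion $L\subseteq\LL(\sim_2,\{1,2\})$, I would first show that distinct words of $L$ lie in distinct $\sim_2$-classes, which by Theorem~\ref{thm:fosse} amounts to distinct Parikh matrices. A direct count gives, for $w=1^s2^t\,1\,2^a1^b$, the data $|w|_1=s+1+b$, $|w|_2=t+a=:q$ and $\binom{w}{12}=s(t+a)+a=sq+a$ (while $1^s2^t$ gives the maximal value $\binom{w}{12}=|w|_1|w|_2$). Since $t\ge1$ forces $0\le a\le q-1$, the pair $(s,a)$ is recovered from $\binom{w}{12}$ by Euclidean division, whence $t=q-a$ and $b=|w|_1-1-s$ are determined; the two families are separated because the second always satisfies $\binom{w}{12}<|w|_1|w|_2$. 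This injectivity on classes, together with $\LL(\sim_2,\{1,2\})\subseteq L$ and the fact that $\LL(\sim_2,\{1,2\})$ meets every class exactly once, forces $L=\LL(\sim_2,\{1,2\})$ by the usual transversal argument: any $w\in L$ shares its class with the class representative $w_0\in\LL(\sim_2,\{1,2\})\subseteq L$, and injectivity gives $w=w_0$. The regular expression above then exhibits $\LL(\sim_2,\{1,2\})$ as a regular language. The main obstacle I anticipate is not the algebra of Step~2 but the clean justification of Step~1, namely that local minimality (no decreasing move) already forces the global pattern-avoidance and hence membership in $L$; the overlap analysis of $212$ is where I would be most careful.
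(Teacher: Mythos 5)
Your proposal is correct, and it coincides with the paper's proof in its first half: both arguments use Theorem~\ref{thm:fosse} to observe that a lexicographically least word cannot factor as $x21y12z$, hence $\LL(\sim_2,\{1,2\})$ is contained in the complement of $\{1,2\}^*21\{1,2\}^*12\{1,2\}^*$ (and your run analysis correctly identifies this complement as $1^*2^*+1^*2^+12^*1^*$, with the $212$-overlap handled properly). Where you genuinely diverge is in the reverse inclusion. The paper closes the argument by counting: it checks that the complement has exactly $(n^3+5n+6)/6$ words of length $n$ and invokes the known value of $\#(\{1,2\}^n/\!\sim_2)$ from \cite{RigoSalimov:2015}, so the containment must be an equality. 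You instead show directly that distinct words of the candidate language have distinct Parikh matrices --- via the computation $\binom{1^s2^t12^a1^b}{12}=sq+a$ with $q=|w|_2$ and $0\le a\le q-1$, recovering $(s,a)$ by Euclidean division and separating the two families by whether $\binom{w}{12}=|w|_1|w|_2$ --- and then apply the transversal argument. Your route is self-contained (it does not rely on the external enumeration of $\{1,2\}^n/\!\sim_2$, and in fact reproves the cake-number formula as a byproduct), at the cost of a slightly longer computation; the paper's route is shorter but leans on a cited counting result. Both are valid proofs of regularity.
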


\begin{proof}
    As a consequence of Theorem~\ref{thm:fosse}, if a word $u$ belongs to $\LL({\sim_{2}},\{1,2\})$, it cannot be of the form $x21y12z$ because otherwise, the word $x12y21z$ belongs to the same class and is lexicographically less. 
	Consequently,
	$$\LL({\sim_{2}},\{1,2\}) \subseteq \{1,2\}^*\setminus \{1,2\}^* 21 \{1,2\}^* 12 \{1,2\}^*.$$
The reader can check that the language in the r.h.s. has exactly $(n^3+5n+6)/6$ words of length $n$. We conclude with the previous remark that the two languages are thus equal. 
\end{proof}

\section{$2$-binomial equivalence over a $m$-letter alphabet}\label{sec:3}

Theorem~\ref{thm:fosse} does not hold for ternary or larger alphabets. Indeed, the two words $1223312$ and $2311223$ are $2$-binomially equivalent but both words belong to $\sing(\equiv_2, \{1,2,3\})$ which means that $1223312\not\equiv_2^* 2311223$. It is therefore meaningful to study $\sim_2$ over larger alphabets and to describe the $2$-binomial equivalence classes.
\medskip

The first few terms of $\left(\# \left(\{1,2,3\}^n/\!\sim_2\right)\right)_{n\ge 0}$ are given by $$1, 3, 9, 27, 78, 216, 568, 1410,\ldots.$$ This sequence also appears in the Sloane's encyclopedia as entry {\tt A140348} which is the growth function for the submonoid generated by the generators of the free nil-$2$ group on three generators. In this section, we make explicit the connection between these two notions (see Theorem~\ref{thm:isom}).

Recall that the {\em commutator} of two elements $x,y$ belonging to a multiplicative group $(G,\cdot)$ is $[x,y]=x^{-1}y^{-1}xy$. Hence, the following relations hold
$$xy=yx[x,y]\quad \forall x,y\in G.$$
A {\em nil-$2$} group is a group $G$ for which the commutators belong to the center $Z(G)$, i.e., 
\begin{align}
\label{equ:comm}
[x,y]z=z[x,y]\quad \forall x,y,z\in G.
\end{align}
Let $\alphabet=\{1,\ldots,m\}$. The free nil-$2$ group on $m$ generators has thus a presentation
$$N_2(\alphabet)= \left\langle \alphabet \, \mid\,   [x,y]z=z[x,y]\ (x,y,z\in \alphabet)\right\rangle.$$
As an example, making use of these relations, let us show that two elements of the free group on $\{1,2,3\}$ are equivalent in $N_2(\{1,2,3\})$:
$$12321=(12[2,1])[1,2]321=21[1,2]321=213(21[1,2])=21312.$$

Let $\alphabet^{-1}$ be the alphabet $\{1^{-1}, \ldots, m^{-1} \}$ of the inverse letters, that we suppose disjoint from $\alphabet$. By abuse of notation, for all $x\in\alphabet$, $(x^{-1})^{-1}$ is the letter $x$. 
Since $N_2(\alphabet)$ is the quotient of the free monoid $\left( \alphabet \cup \alphabet^{-1} \right)^*$ under the congruence relations generated by $xx^{-1} = \varepsilon$ and \eqref{equ:comm}, we will consider the natural projection denoted by
$$\pi : \left(\alphabet \cup \alphabet^{-1} \right)^* \rightarrow N_2(\alphabet).
$$ 

In Section~\ref{ss31}, we provide an algorithmic description of any $2$-binomial class. We make use of this description in Section~\ref{ss32} to show that the monoid $\alphabet^*/\!\sim_2$ is isomorphic to the submonoid, generated by $\alphabet$, of the nil-$2$ group $N_2(\alphabet)$.

\subsection{A nice tree}\label{ss31}

Let $w$ be a word over $\alphabet$ and  $\ell$ the lexicographically least element in its abelian equivalence class, i.e., $$\ell=1^{|w|_1}2^{|w|_2}\cdots m^{|w|_m}.$$ Consider the following algorithm that, given $\ell$, produces the word $w$  only by exchanging adjacent symbols. We let $\wedge(u,v)$ denote the longest common prefix of two finite words $u$ and $v$. We define a sequence of words $\ell_i$ starting with $\ell_0=\ell$.

\medskip

$i\leftarrow 0$, $\ell_0\leftarrow\ell$

{\bf while} $\ell_i\neq w$ 

\quad $p\leftarrow \wedge(\ell_i,w)$ thus $\ell_i=pcx$ and $w=pdy$ with $c,d \in \alphabet$ and $c<d$

\quad consider the leftmost $d$ occurring in $cx$, i.e, 

\quad $cx=cudv$ and $u$ only contains letters less than $d$. 

\quad {\bf for} $j=1$ to $|u|$

\quad\quad $\ell_{i+j}\leftarrow p c u_1\cdots u_{|u|-j} d u_{|u|-j+1} \cdots u_{|u|} v$

\quad $\ell_{i+|u|+1}\leftarrow p d c u v$

\quad $i \leftarrow i + |u| + 1$
\medskip

\begin{Remark}
It can easily be shown that, at the beginning of each iteration of the while loop, the word $cx$ is the least lexicographic word of its abelian class. It follows that $c < d$ and $u$ only contains letters less than $d$.
\end{Remark}

In the for loop, the two letters $u_{|u|-j+1}$ and $d$ are exchanged. Observe that $u_{|u|-j+1}<d$. After the for loop, the two letters $c$ and $d$ are exchanged (and again $c<d$). We record the sequence and number of exchanges of the form $cd\mapsto dc$ for all $c,d\in\alphabet$ with $c<d$ that are performed when executing the algorithm.

Using~\eqref{eq:remobv}, the next lemma is obvious. 
\begin{Lemma}
    Two abelian equivalent words $u,v$ are $2$-binomially equivalent if and only if $\binom{u}{ab}=\binom{v}{ab}$ for all $a,b\in\alphabet$ with $a<b$. Let $\ell$ be a word in $1^* 2^*\cdots m^*$. In the set of tuples of size $m(m-1)/2$
$$\left\{
\left(\binom{w}{12},\ldots,\binom{w}{1m},\binom{w}{23},\ldots, \binom{w}{2m},\ldots , \binom{w}{(m-1)m}\right) \mid w\in [\ell]_{\sim_1} \right\},$$
the greatest element, for the lexicographic ordering, is achieved for $w=\ell$. 

\end{Lemma}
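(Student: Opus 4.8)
The plan is to handle the two assertions separately; both reduce to elementary counting once the right invariants are isolated.

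For the first assertion, I would begin by noting that since $u$ and $v$ are abelian equivalent they share the same Parikh vector, so $|u|_a=|v|_a$ for every letter $a$. This at once gives $\binom{u}{aa}=\binom{|u|_a}{2}=\binom{v}{aa}$ and also equalizes the right-hand side of \eqref{eq:remobv} for every pair $a\neq b$. The condition $u\sim_2 v$ asks that $\binom{u}{x}=\binom{v}{x}$ for all $x$ of length at most two; the length-zero and length-one requirements are automatic from abelian equivalence, and among the length-two requirements those with a repeated letter are already forced. Only the off-diagonal coefficients remain. Feeding the hypothesis $\binom{u}{ab}=\binom{v}{ab}$ (for $a<b$) into \eqref{eq:remobv} then forces $\binom{u}{ba}=\binom{v}{ba}$ as well, so the stated restricted condition is equivalent to full $2$-binomial equivalence.

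For the second assertion, the key observation is the bound $\binom{w}{ab}\le |w|_a\,|w|_b$ for $a<b$, valid for every $w$ in the abelian class of $\ell$. Indeed $\binom{w}{ab}$ counts the pairs of positions carrying an $a$ followed later by a $b$, and this count depends only on the relative arrangement of the $a$'s and $b$'s among themselves; it is maximised precisely when every $a$ precedes every $b$, yielding $|w|_a\,|w|_b$. Since in $\ell=1^{|w|_1}\cdots m^{|w|_m}$ every $a$ precedes every $b$ whenever $a<b$, the word $\ell$ attains this maximum simultaneously in each coordinate of the tuple. Hence the tuple associated with $\ell$ dominates that of any other member of the class componentwise, and componentwise domination trivially implies domination for the lexicographic order.

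I do not expect a genuine obstacle here: both parts rest on the single fact that off-diagonal binomial coefficients are governed by the pairwise order statistics of the letters. The only point requiring a word of care is to make explicit that the first coordinate $\binom{w}{12}$, and in fact every coordinate, is individually maximised by $\ell$, from which the lexicographic claim follows immediately.
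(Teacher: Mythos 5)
Your proof is correct and matches the paper's intent: the paper simply declares this lemma obvious from \eqref{eq:remobv}, and your argument fills in exactly that reasoning for the first assertion (abelian equivalence fixes the diagonal coefficients and the right-hand side of \eqref{eq:remobv}, so the coefficients $\binom{u}{ab}$ with $a<b$ determine the rest). Your treatment of the second assertion, showing that $\ell$ attains the componentwise maximum $|w|_a|w|_b$ in every coordinate and hence the lexicographic maximum, is the natural argument the authors had in mind and is in fact slightly stronger than what is stated.
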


We consider the $m(m-1)/2$ coefficients $\binom{u}{ab}$ with $a<b$. Note that, in the algorithm, if $\ell_{j+1}$ is obtained from $\ell_j$ by an exchange of the form $ab\mapsto ba$, all these coefficients remain unchanged except for
\begin{align}\label{eq:moinsUn}
\binom{\ell_{j+1}}{ab}=\binom{\ell_{j}}{ab}-1.
\end{align}

\begin{Corollary}
    When applying the algorithm producing the word $w$ from the word $\ell=1^{|w|_1}2^{|w|_2}\cdots m^{|w|_m}$, the total number of exchanges $ab\mapsto ba$, with $a < b$, is given by
$$\binom{\ell}{ab}-
\binom{w}{ab} = \binom{w}{ba}.$$
\end{Corollary}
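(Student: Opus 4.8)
The plan is to prove the two displayed equalities separately, fixing throughout a single ordered pair $a<b$ and writing $N_{ab}$ for the number of exchanges of type $ab\mapsto ba$ performed by the algorithm. For the first equality, $N_{ab}=\binom{\ell}{ab}-\binom{w}{ab}$, I would track the \emph{single} coefficient $\binom{\cdot}{ab}$ along the sequence $\ell=\ell_0,\ell_1,\ldots,\ell_T=w$ produced by the algorithm. The key observation, already implicit in the pseudocode, is that every exchange performed is of the form $a'b'\mapsto b'a'$ with $a'<b'$: the swaps inside the for loop move the leftmost $d$ leftward past a letter $u_{|u|-j+1}<d$, and the concluding swap exchanges $c$ and $d$ with $c<d$. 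In each case a strictly larger letter moves left over a strictly smaller one, so in particular no exchange of the form $b'a'\mapsto a'b'$ ever occurs.

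Given this directionality, \eqref{eq:moinsUn} applies step by step: an exchange decreases $\binom{\cdot}{ab}$ by exactly $1$ when it swaps an adjacent pair with this particular ordered content, and leaves $\binom{\cdot}{ab}$ unchanged otherwise. Summing the per-step changes telescopes, since $\binom{\cdot}{ab}$ is monotonically non-increasing along the sequence, giving $\binom{w}{ab}=\binom{\ell_T}{ab}=\binom{\ell_0}{ab}-N_{ab}=\binom{\ell}{ab}-N_{ab}$, which rearranges to the claimed identity.

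For the second equality I would use the shape of $\ell$ together with \eqref{eq:remobv}. Since $\ell=1^{|w|_1}\cdots m^{|w|_m}$ places every occurrence of $a$ before every occurrence of $b$ (as $a<b$), we have $\binom{\ell}{ab}=|\ell|_a\,|\ell|_b=|w|_a\,|w|_b$, the last equality because $\ell$ and $w$ share the same Parikh vector. On the other hand, substituting $\binom{w}{aa}=\binom{|w|_a}{2}$ and $\binom{w}{bb}=\binom{|w|_b}{2}$ into \eqref{eq:remobv} yields $\binom{w}{ab}+\binom{w}{ba}=|w|_a\,|w|_b$. Combining the two gives $\binom{\ell}{ab}-\binom{w}{ab}=|w|_a\,|w|_b-\binom{w}{ab}=\binom{w}{ba}$, as desired.

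The one genuine point to be careful about, and the step I would spell out most explicitly, is the directionality claim that the algorithm performs exclusively $a'b'\mapsto b'a'$ exchanges with $a'<b'$; this is precisely what guarantees that $\binom{\cdot}{ab}$ never increases and hence that the net drop equals the count $N_{ab}$. Once that is secured, both halves of the statement reduce to routine work: a telescoping sum for the first equality and the elementary identity $\binom{w}{ab}+\binom{w}{ba}=|w|_a\,|w|_b$ coming from \eqref{eq:remobv} for the second.
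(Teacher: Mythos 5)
Your proposal is correct and follows essentially the same route as the paper, which treats the corollary as an immediate consequence of \eqref{eq:moinsUn}: each exchange is of the form $a'b'\mapsto b'a'$ with $a'<b'$, so telescoping gives $N_{ab}=\binom{\ell}{ab}-\binom{w}{ab}$, and the second equality follows from $\binom{\ell}{ab}=|w|_a|w|_b$ together with $\binom{w}{ab}+\binom{w}{ba}=|w|_a|w|_b$ as in \eqref{eq:remobv}. You merely spell out the directionality observation that the paper leaves implicit in its Remark.
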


Consequently two words are $2$-binomially equivalent if and only if they are abelian equivalent and the total number of exchanges of each type $ab \mapsto ba$, $a<b$, when applying the algorithm to these two words, is the same. An equivalence class $[w]_{\sim_2}$ is thus completely determined by a word $\ell=1^{n_1}2^{n_2}\cdots m^{n_m}$ and the numbers of different exchanges. We obtain an algorithm generating all words of $[w]_{\sim_2}$.

\begin{Definition}
Let us build a (directed) tree whose vertices are words, the root is $\ell_0 = 1^{|w|_1}2^{|w|_2}\cdots m^{|w|_m}$. There exists an edge from $v$ to $v'$ if and only if $v'$ is obtained by an exchange of the type $ab \mapsto ba$, $a<b$, from $v$. The edge is labeled with the applied exchange.
\end{Definition}
 To generate the $\sim_2$-equivalence class of $w$, it suffices to take all the nodes that are at level $\sum_{1\leq a < b \leq m} \binom{w}{ba}$ such that the path from $\ell_0$ to the node is composed of $\binom{w}{ba}$ edges labeled by $ab \mapsto ba$, for all letters $a<b$. Note that a polynomial time algorithm checking whether or not two words are $k$-binomially equivalent has been obtained in \cite{Manea} and is of independent interest.

\begin{Example}
Let us consider the word $w = 1223312$ on the alphabet $\{1,2,3\}$. Its $\sim_2$-equivalence class is $\{ 1223312, 2311223\}$. It can be read from the tree in Figure~\ref{fig:tree}.
Some comments need to be done about this figure. The edges labeled by $12 \mapsto 21$ (resp., $13 \mapsto 31$, $23 \mapsto 32$) are represented in black (resp., red, green). 
In every node, the vertical line separates the longest common prefix (denoted by $p$ in the algorithm) between the word in the node and the word $w'$ from $[w]_{\sim_2}$ we are going to reach. If the current node can be written $pcx$ while the word $w'$ can be written $pdy$, the underlined letter corresponds to the leftmost $d$ occurring in $cx$ (see the algorithm).
Finally, when building the tree, if a path has a number of black (resp. red, green) edges greater than $\binom{w}{21}$ (resp., $\binom{w}{31}$, $\binom{w}{32}$), it is useless to continue computing children of this node, since they won't lead to an element of $[w]_{\sim_2}$. 

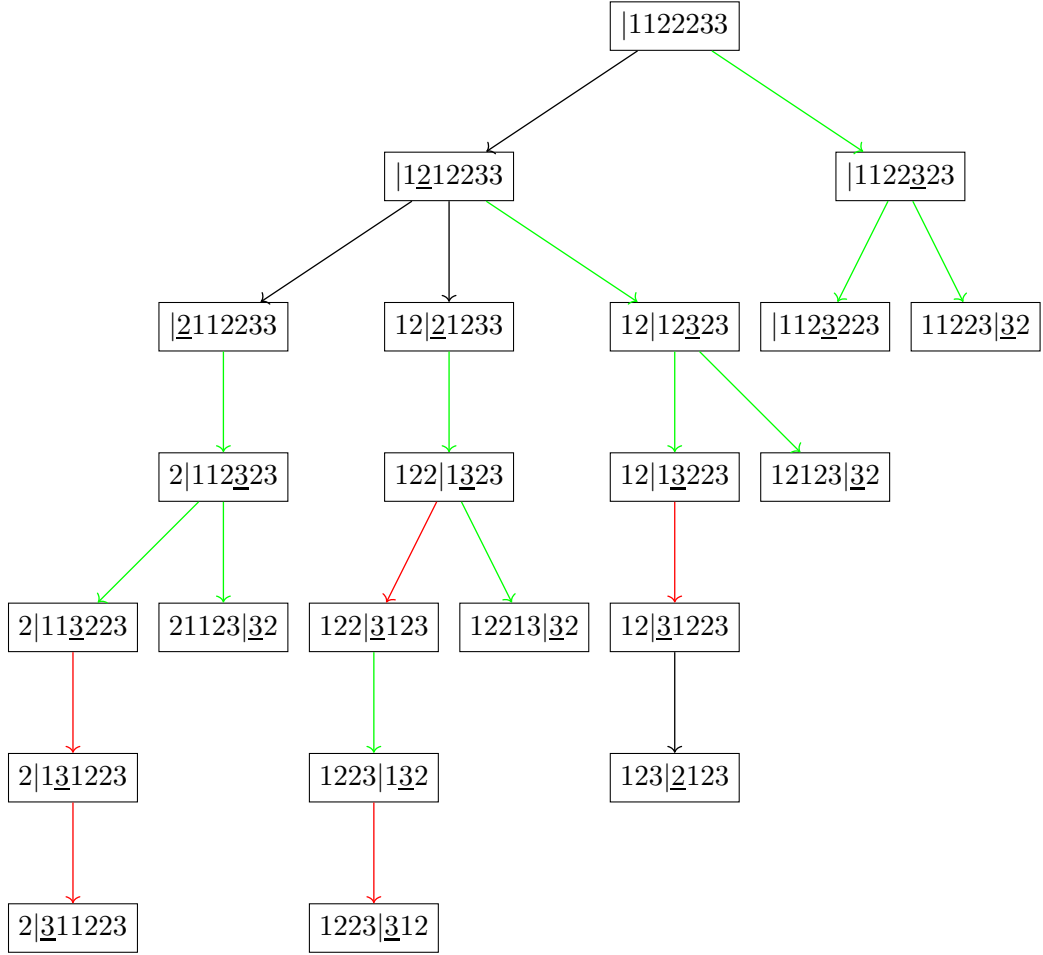
\begin{figure}[h!tbp]
\begin{center}
\begin{tikzpicture}
\tikzstyle{every node}=[shape=rectangle,fill=none,draw=black]
\node (000) at (0,0) {$|1122233$};

\node (100) at (-3,-2) {$|1\underline{2}12233$};
\node (001) at (3,-2) {$|1122\underline{3}23$};

\node (200a) at (-6,-4) {$|\underline{2}112233$};
\node (200b) at (-3,-4) {$12|\underline{2}1233$};
\node (101) at (0,-4) {$12|12\underline{3}23$};
\node (002a) at (2,-4) {$|112\underline{3}223$};
\node (002b) at (4,-4) {$11223|\underline{3}2$};

\node (201a) at (-6,-6) {$2|112\underline{3}23$};
\node (201b) at (-3,-6) {$122|1\underline{3}23$};
\node (102a) at (0,-6) {$12|1\underline{3}223$};
\node (102b) at (2,-6) {$12123|\underline{3}2$};

\node (202a) at (-8,-8) {$2|11\underline{3}223$};
\node (202b) at (-6,-8) {$21123|\underline{3}2$};
\node (211) at (-4,-8) {$122|\underline{3}123$};
\node (202c) at (-2,-8) {$12213|\underline{3}2$};
\node (112) at (0,-8) {$12|\underline{3}1223$};

\node (212a) at (-8,-10) {$2|1\underline{3}1223$};
\node (212b) at (-4,-10) {$1223|1\underline{3}2$};
\node (212c) at (0,-10) {$123|\underline{2}123$};

\node (222a) at (-8,-12) {$2|\underline{3}11223$};
\node (222b) at (-4,-12) {$1223|\underline{3}12$};

\tikzstyle{every node}=[shape=circle,minimum size=5pt,inner sep=2pt]
\tikzstyle{every path}=[color=black, line width = 0.5 pt]
\draw [->] (000) to node [above] {} (100);
\draw [->] (100) to node [above] {} (200a);
\draw [->] (100) to node [above] {} (200b);
\draw [->] (112) to node [above] {} (212c);

\tikzstyle{every path}=[color=red, line width = 0.5 pt]
\draw [->] (201b) to node [above] {} (211);
\draw [->] (102a) to node [above] {} (112);
\draw [->] (202a) to node [above] {} (212a);
\draw [->] (212a) to node [above] {} (222a);
\draw [->] (212b) to node [above] {} (222b);

\tikzstyle{every path}=[color=green, line width = 0.5 pt]
\draw [->] (000) to node [above] {} (001);
\draw [->] (100) to node [above] {} (101);
\draw [->] (001) to node [above] {} (002a);
\draw [->] (001) to node [above] {} (002b);
\draw [->] (200a) to node [above] {} (201a);
\draw [->] (200b) to node [above] {} (201b);
\draw [->] (101) to node [above] {} (102a);
\draw [->] (101) to node [above] {} (102b);
\draw [->] (201a) to node [above] {} (202a);
\draw [->] (201a) to node [above] {} (202b);
\draw [->] (201b) to node [above] {} (202c);
\draw [->] (211) to node [above] {} (212b);

\end{tikzpicture}\caption{Generating the $\sim_2$-class of $1223312$.}\label{fig:tree}
\end{center}
\end{figure}

\end{Example}

\subsection{Isomorphism with a nil-$2$ submonoid}\label{ss32}

Since we are dealing with the extended alphabet $\alphabet\cup\alphabet^{-1}$, let us first introduce a convenient variation of binomial coefficients of words taking into account inverse letters.

\begin{Definition}
Let $t\ge 0$ be an integer. For all words $u$ over the alphabet $\alphabet \cup \alphabet^{-1}$ and $v\in\alphabet^t$, let us define
$$
\begin{bmatrix}
u \\ v
\end{bmatrix} =  
\sum_{(e_1, \ldots, e_{t}) \in \{-1,1\}^{t}} \quad \left( \prod_{i=1}^{t} e_i\right) \quad \binom{u}{v_1^{e_1} \cdots v_{t}^{e_{t}}},
$$
where $\binom{u}{v_1^{e_1} \cdots v_{t}^{e_{t}}}$ is the usual binomial coefficient over the alphabet $\alphabet \cup \alphabet^{-1}$.
Let $w \in \left(\alphabet \cup \alphabet^{-1}\right)^*$ and denote
$$\Phi(w) = \left(
\begin{bmatrix}
w \\ 1
\end{bmatrix}, \ldots,
\begin{bmatrix}
w \\ m
\end{bmatrix},
\begin{bmatrix}
w \\ 12
\end{bmatrix}, \ldots, 
\begin{bmatrix}
w \\ m(m-1)
\end{bmatrix}
\right)^\intercal \in \mathbb{Z}^{m^2}
$$ where the last $m^2- m$ components are obtained from all the words made of two different letters in $\alphabet$, ordered by lexicographical order.

\end{Definition}

Notice that, if $u$ and $v$ are words over $\alphabet$, then
$$\begin{bmatrix}
    u\\ v
\end{bmatrix}=\binom{u}{v}$$ and so, 
$u$ and $v$ are $2$-binomially equivalent if and only if $\Phi(u) = \Phi(v)$.

\begin{Example}
Let $\alphabet = \{1,2,3\}$ and $w = 123^{-1}231^{-1}$.
Applying the previous definition, for all $a \in \alphabet$, we have
$$
\begin{bmatrix}
w \\ a
\end{bmatrix} = \binom{w}{a} - \binom{w}{a^{-1}}.
$$
Similarly, for all $a, b \in \alphabet$, we have
\begin{equation}
    \label{eq:crochet2}
\begin{bmatrix}
w \\ ab
\end{bmatrix} = \binom{w}{ab} - \binom{w}{a^{-1}b} - \binom{w}{ab^{-1}} + \binom{w}{a^{-1}b^{-1}}.
\end{equation}
Therefore, computing classical binomial coefficients, we obtain
\begin{align*}
\Phi(w) =& \left(
0,2,0,2,0,-2,1,0,-1
\right)^\intercal.
\end{align*}
\end{Example}

We are now ready to prove the main result of this section.
\begin{Theorem}\label{thm:isom}
Let $\alphabet=\{1,\ldots,m\}$. The monoid $\alphabet^*/\!\sim_2$ is isomorphic to the submonoid, generated by $\alphabet$, of the nil-$2$ group $N_2(\alphabet)$.
\end{Theorem}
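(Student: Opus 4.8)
The plan is to exhibit the restriction of $\pi$ to $\Sigma^*$ as a surjective monoid homomorphism onto the submonoid $\langle\Sigma\rangle$ of $N_2(\Sigma)$ generated by $\Sigma$, and to identify its kernel congruence with $\sim_2$; the isomorphism theorem for monoids then gives the result. Writing $f=\pi|_{\Sigma^*}$, surjectivity onto $\langle\Sigma\rangle$ is immediate, so everything reduces to proving, for $u,v\in\Sigma^*$, the equivalence $f(u)=f(v)\iff u\sim_2 v$.

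For the implication $u\sim_2 v\Rightarrow f(u)=f(v)$ I would use the algorithm of Section~\ref{ss31}. Whenever $\ell_{j+1}$ is obtained from $\ell_j$ by an exchange $ab\mapsto ba$ at some position, the relation $ab=ba[a,b]$ together with the centrality of $[a,b]$ gives $\pi(\ell_j)=\pi(\ell_{j+1})[a,b]$. Running the algorithm from the common abelian least element $\ell$ to $w$, and using the Corollary (the number of exchanges of each type $ab\mapsto ba$ equals $\binom{w}{ba}$), one obtains
$$\pi(w)=\pi(\ell)\prod_{a<b}[a,b]^{-\binom{w}{ba}}.$$
Since commutators are central, the product does not depend on the order in which the exchanges are performed, so the right-hand side depends on $w$ only through the numbers $|w|_a$ and $\binom{w}{ba}$ with $a<b$. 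If $u\sim_2 v$, then in particular $|u|_a=|v|_a$ for all $a$ and $\binom{u}{ba}=\binom{v}{ba}$ for all $a<b$, so these data coincide, $\pi(u)$ and $\pi(v)$ equal the very same element of $N_2(\Sigma)$, and $f(u)=f(v)$. In particular $f$ descends to a well-defined surjective homomorphism $\bar f:\Sigma^*/\!\sim_2\;\to\langle\Sigma\rangle$.

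The reverse implication $f(u)=f(v)\Rightarrow u\sim_2 v$ is the injectivity of $\bar f$, and this is where $\Phi$ does the work. The key point is that $\Phi$ factors through $\pi$, i.e.\ $\Phi(w)$ depends only on $\pi(w)$. To prove this I would realize the brackets as coefficients of a truncated Magnus expansion: send each $a\in\Sigma$ to $1+X_a$ and each $a^{-1}$ to $(1+X_a)^{-1}$ in the quotient $R$ of the noncommutative polynomial ring $\mathbb{Z}\langle X_1,\dots,X_m\rangle$ by the ideal generated by monomials of degree $\ge 3$. This defines a monoid homomorphism $\mu:(\Sigma\cup\Sigma^{-1})^*\to R^\times$ for which the coefficient of $X_{v_1}\cdots X_{v_t}$ (with $t\le 2$) in $\mu(w)$ is exactly $\begin{bmatrix} w\\ v\end{bmatrix}$, so the components of $\Phi(w)$ are read off from $\mu(w)$. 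Since $\mu(a)\mu(a^{-1})=1$ holds identically, $\mu$ respects free reduction; and modulo degree $3$ one computes $\mu([x,y])\equiv 1+(X_xX_y-X_yX_x)$, which commutes with every $1+X_z$ because $(X_xX_y-X_yX_x)X_z$ and $X_z(X_xX_y-X_yX_x)$ are of degree $3$, hence vanish in $R$; thus $\mu$ also respects \eqref{equ:comm}. Consequently $\mu$, and therefore $\Phi$, factors through $\pi$. Now $f(u)=f(v)$ forces $\Phi(u)=\Phi(v)$; but $u,v\in\Sigma^*$, for which $\begin{bmatrix} u\\v\end{bmatrix}=\binom{u}{v}$, so $\Phi(u)=\Phi(v)$ is precisely the statement $u\sim_2 v$. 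This yields injectivity and completes the isomorphism.

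I expect the main obstacle to be exactly this factorization of $\Phi$ through $N_2(\Sigma)$, namely checking that the bracket vector is invariant both under free reduction $xx^{-1}=\varepsilon$ and under the centrality relation \eqref{equ:comm}. The truncated Magnus homomorphism is the clean device that makes both invariances transparent at once—free reduction because $\mu$ is a genuine homomorphism into the unit group of an algebra, and centrality because the degree-$2$ commutator $X_xX_y-X_yX_x$ commutes with every $1+X_z$ modulo degree $3$. Establishing these invariances by direct manipulation of the bracket identity \eqref{eq:crochet2} together with a concatenation formula for $\begin{bmatrix}\cdot\\\cdot\end{bmatrix}$ is possible but combinatorially heavier, so I would favor the power-series formulation.
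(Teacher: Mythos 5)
Your proof is correct and has the same two-part architecture as the paper's own argument: the implication $u\sim_2 v\Rightarrow\pi(u)=\pi(v)$ via the exchange algorithm of Section~\ref{ss31} and the centrality of commutators (your identity $\pi(w)=\pi(\ell)\prod_{a<b}[a,b]^{-\binom{w}{ba}}$ is exactly the paper's, rewritten using $[a,b]^{-1}=[b,a]$), and the converse via the factorization of $\Phi$ through $\pi$. The one genuine difference is how that factorization is proved. The paper checks the two invariances $\Phi(uxx^{-1}v)=\Phi(uv)$ and $\Phi(u[x,y]zv)=\Phi(uz[x,y]v)$ by direct computation, using the concatenation identity for the brackets derived from \eqref{eq:crochet2}; you instead realize the brackets as coefficients of the Magnus homomorphism $a\mapsto 1+X_a$ into the degree-$\le 2$ truncation of $\mathbb{Z}\langle X_1,\ldots,X_m\rangle$, so that invariance under free reduction is automatic (the map lands in the unit group) and invariance under \eqref{equ:comm} reduces to $\mu([x,y])\equiv 1+X_xX_y-X_yX_x$ being central modulo degree $3$ --- I verified this computation and it is right. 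Your route is cleaner, packages both invariances at once, and suggests how one might attack $\sim_k$ versus nil-$k$ groups; the paper's is more elementary and self-contained. One small imprecision: the coefficient of $X_a^2$ in $\mu(w)$ is not $\begin{bmatrix}w\\aa\end{bmatrix}$ but that quantity plus $|w|_{a^{-1}}$, because the $X_a^2$ term of $(1+X_a)^{-1}$ contributes from single positions; so your claim that all degree-$\le 2$ coefficients equal the brackets should be restricted to single letters and to monomials $X_aX_b$ with $a\neq b$. Since these are precisely the components of $\Phi$, nothing in the argument breaks.
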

\begin{proof}
Let us first show that for any two words $w$ and $w'$ over $\alphabet \cup \alphabet^{-1}$ such that $\pi(w) = \pi(w')$, the relation $\Phi(w) = \Phi(w')$ holds. Indeed, using~\eqref{eq:crochet2} one can easily check that, for all $a,b\in\alphabet$ and $s,t\in(\alphabet\cup\alphabet^{-1})^*$, we have
$$
\begin{bmatrix}
    st\\
    ab\\
\end{bmatrix}=
\begin{bmatrix}
    s\\
    ab\\
\end{bmatrix}+
\begin{bmatrix}
    t\\
    ab\\
\end{bmatrix}+
\begin{bmatrix}
    s\\ 
    a\\
\end{bmatrix}
\begin{bmatrix}
    t\\
    b\\
\end{bmatrix}.$$
Now, one can show that, for all $u,v\in (\alphabet \cup \alphabet^{-1})^*$ and $x,y,z\in \alphabet \cup \alphabet^{-1}$,
$$\Phi(uv)=\Phi(uxx^{-1}v)\quad \text{ and }\quad \Phi(u[x,y]zv)=\Phi(uz[x,y]v).$$
For instance, let $a,b\in\alphabet$ with $a\neq b$, 
\begin{align*}
\begin{bmatrix}
    uxx^{-1}v\\
    ab\\
\end{bmatrix}&=
\begin{bmatrix}
    u\\ ab\\
\end{bmatrix}+
\begin{bmatrix}
    xx^{-1}v\\ ab\\
\end{bmatrix}+
\begin{bmatrix}
    u\\ a\\
\end{bmatrix}
\begin{bmatrix}
    xx^{-1}v\\ b\\
\end{bmatrix}\\
&= \begin{bmatrix}
    u\\ ab\\
\end{bmatrix}+\underbrace{
\begin{bmatrix}
    xx^{-1}\\ ab\\
\end{bmatrix}}_{=0}+\begin{bmatrix}
    v\\ ab\\
\end{bmatrix}+
\underbrace{\begin{bmatrix}
    xx^{-1}\\ a\\
\end{bmatrix}}_{=0}
  \begin{bmatrix}
      v\\ b\\
  \end{bmatrix}+
  \begin{bmatrix}
      u\\ a\\
  \end{bmatrix}.\left(
  \underbrace{\begin{bmatrix}
       xx^{-1}\\ b\\
  \end{bmatrix}}_{=0}+
  \begin{bmatrix}
      v\\ b\\
  \end{bmatrix}\right)\\
&=
    \begin{bmatrix}
        uv\\ ab\\
    \end{bmatrix}.
\end{align*}
This implies that a map $\Phi_N$ can be defined on the free nil-$2$ group (otherwise stated, the diagram depicted in Figure~\ref{fig:cdiag} is commutative) by 
$$\forall r\in N_2(\alphabet),\quad \Phi_N(r)=\Phi(w) \text{ for any }w\text{ such that }\pi(w)=r.$$
\begin{figure}[h!tbp]
\begin{center}
\begin{tikzpicture}
\tikzstyle{every node}=[fill=none,minimum size=2pt,inner sep=2pt]
\node (alph) at (0,0) {$\alphabet^*$};
\node (alphbis) at (3,0) {$\left( \alphabet \cup \alphabet^{-1} \right)^*$};
\node (nil) at (6,0) {$N_2(\alphabet)$};
\node (z) at (3,-3) {$\mathbb{Z}^{m^2}$};
\draw[->] (alph.south) to node [left] {$\Phi_{| \alphabet^*}$} (z) ;
\draw[->] (alphbis.south) to node [left] {$\Phi$} (z) ;
\draw[->] (nil.south) to node [right] {$\Phi_N$} (z) ;
\draw[->] (alphbis.east) to node [above] {$\pi$} (nil.west);
\path
(alph) edge[{Hooks[right,length=0.4ex]}->] node {} (alphbis); 

\end{tikzpicture}\caption{A commutative diagram (proof of Theorem~\ref{thm:isom}).}\label{fig:cdiag}
\end{center}
\end{figure}
In particular, if $w$ and $w'$ are words over $\alphabet$ such that $\pi(w)=\pi(w')$, then we may conclude that $\Phi(w)=\Phi(w')$ meaning that they are $2$-binomially equivalent. Otherwise stated, for every $r\in N_2(\alphabet)$, $\pi^{-1}(r)\cap \alphabet^*$ is a subset of an equivalence class for $\sim_2$.

To conclude the proof, we have to show that all the elements of an equivalence class for $\sim_2$ are mapped by $\pi$ on the same element of $N_2(\alphabet)$. Let $u,v\in\alphabet^*$ be such that $u\sim_2 v$.
Using the algorithm described in Section~\ref{ss31}, there exists a path in the associated tree from the root $1^{|u|_1}2^{|u|_2}\cdots m^{|u|_m}$ to $u$ and another one to $v$. 
By definition of the commutator, if $u$ is written $pbas$ with $a<b$, then $u = p ab [b,a] s$. Moreover, $\pi(u) = \pi (p ab s [b,a])$ since the commutators are central in $N_2(\alphabet)$. 

Therefore, following backwards the path from $u$ to the root of the tree and recalling that each edge corresponds to an exchange of $2$ letters, we obtain
$$
\pi(u) = \pi \left(1^{|u|_1}2^{|u|_2}\cdots m^{|u|_m} [2,1]^{\binom{u}{21}} \cdots [m,1]^{\binom{u}{m1}}\cdots [m,m-1]^{\binom{u}{m(m-1)}}\right)
$$
and, similarly, following backwards the path from $v$ to the root, 
$$
\pi(v) = \pi \left(1^{|v|_1}2^{|v|_2}\cdots m^{|v|_m} [2,1]^{\binom{v}{21}} \cdots [m,1]^{\binom{v}{m1}}\cdots [m,m-1]^{\binom{v}{m(m-1)}} \right).
$$
But since $u \sim_2 v$, we get $\pi(u) = \pi(v)$. 
\qedhere
\end{proof}

\section{Growth order}\label{sec:4}

We first show that the growth of $\# (\alphabet^n /\!\sim_k)$ is bounded by a polynomial in $n$. This generalizes a result from \cite{RigoSalimov:2015} for a binary alphabet. Note that a similar result was obtained in \cite{Lejeune}. Next, we obtain better estimates for $\sim_2$. 

\begin{Proposition}\label{prop:growth}
Let $\alphabet = \{1, \ldots, m\}$ and $k \geq 1$. We have
$$
\# (\alphabet^n /\!\sim_k) \in \mathcal{O} \left(n^{k^2 m^k}\right)
$$
when $n$ tends to infinity.
\end{Proposition}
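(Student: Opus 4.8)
The plan is to bound the number of $\sim_k$-equivalence classes of words of length $n$ by counting the number of distinct values that the defining invariant can take. Recall that $u \sim_k v$ if and only if $\binom{u}{x} = \binom{v}{x}$ for every word $x$ with $|x| \le k$. Hence an equivalence class is completely determined by the tuple of binomial coefficients $\left( \binom{u}{x} \right)_{|x| \le k}$, and the number of classes in $\alphabet^n$ is at most the number of distinct such tuples arising from words of length $n$. So it suffices to bound, for each relevant $x$, the number of possible values of $\binom{u}{x}$ as $u$ ranges over $\alphabet^n$, and then multiply these bounds over all $x$.

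First I would count how many words $x$ are relevant. Since $\#\alphabet = m$, there are $m^j$ words of length exactly $j$, so the number of words $x$ with $1 \le |x| \le k$ is $\sum_{j=1}^{k} m^j$, which is at most $k\, m^k$ (a crude but convenient bound). For each individual word $x$ of length $j \le k$, the coefficient $\binom{u}{x}$ counts occurrences of $x$ as a subsequence of $u$, and since $u$ has length $n$, this quantity lies between $0$ and $\binom{n}{j} \le \binom{n}{k} \le n^k$. Thus each coordinate takes at most $n^k + 1$ distinct values. Taking the product over all at most $k\,m^k$ relevant words $x$ gives the bound
$$
\#(\alphabet^n /\!\sim_k) \;\le\; (n^k + 1)^{k\, m^k}.
$$

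It then remains to simplify this into the claimed $\mathcal{O}(n^{k^2 m^k})$. Since $(n^k+1)^{k m^k} \le (2n^k)^{k m^k} = 2^{k m^k} \, n^{k \cdot k \cdot m^k} = 2^{k m^k}\, n^{k^2 m^k}$ for $n \ge 1$, and the factor $2^{k m^k}$ is a constant independent of $n$, the right-hand side is indeed $\mathcal{O}(n^{k^2 m^k})$ as $n \to \infty$. This completes the estimate.

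I do not expect any serious obstacle here, since the argument is a direct pigeonhole-style counting of the invariant; the only mild subtlety is bookkeeping the exponents so that the product of the per-coordinate value-counts collapses to the stated exponent $k^2 m^k$. One could tighten the constant by using the exact count $\sum_{j=1}^k m^j$ of relevant words rather than $k\, m^k$, but since the target is only an $\mathcal{O}$-bound the coarse estimate suffices, and the bound is deliberately not claimed to be tight (the sharper growth order for the case $k=2$ is treated separately later in the section).
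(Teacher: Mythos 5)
Your proposal is correct and follows essentially the same argument as the paper: each class is determined by the tuple of coefficients $\binom{u}{x}$ for $1 \le |x| \le k$, each coefficient is bounded by $n^k$, and there are at most $k\,m^k$ of them, giving $(n^k+1)^{k m^k} \in \mathcal{O}\bigl(n^{k^2 m^k}\bigr)$. The only difference is that you spell out the final asymptotic simplification explicitly, which the paper leaves implicit.
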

\begin{proof}
For every $u,v \in \alphabet^*$ such that $1 \leq |v| \leq k$ and $|u| = n$, we have
\begin{align*}
0 \leq \binom{u}{v} \leq \binom{|u|}{|v|} \leq n^{|v|} \leq n^k.
\end{align*}
Therefore, for every $v$ such that $1 \leq |v| \leq k$, we have
$$
\# \left\{ \binom{u}{v} \,:\, |u|=n \right\} \leq n^k +1.
$$
By definition, the $\sim_k$-equivalence class of $u$ is uniquely determined by the values of 
$ \binom{u}{v}$ for all $v\in \alphabet^*$ such that $1 \leq |v| \leq k$.
There are $$
\sum_{i=1}^k m^i \leq k m^k
$$ such coefficients and thus,
$$
\# (\alphabet^n /\!\sim_k) \leq \left(n^k+1\right)^{km^k}.
$$
\qedhere
\end{proof}

We have obtained an upper bound which is far from being optimal but it ensures that the growth is polynomial. However, for $k=2$, it is possible to obtain the polynomial degree of the growth. We make use of Landau notation: $f \in \Theta(g)$ if there exist constants $A, B >0$ such that, for all $n$ large enough, $A\, g(n) \leq f(n) \leq B\, g(n)$.

\begin{Proposition}
Let $\alphabet= \{1,\ldots, m\}$ be an alphabet of size $m \geq 2$. We have 
$$ \# (\alphabet^n /\!\sim_2) \in \Theta \left(n^{m^2 - 1}\right)$$
when $n$ tends to infinity.
\end{Proposition}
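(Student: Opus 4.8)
The plan is to prove matching upper and lower bounds, $\# (\alphabet^n/\!\sim_2) \in \mathcal{O}(n^{m^2-1})$ and $\# (\alphabet^n/\!\sim_2) \in \Omega(n^{m^2-1})$, exploiting the fact established earlier that a $\sim_2$-class is completely determined by the Parikh vector together with the $\binom{m}{2}$ values $\binom{w}{ba}$ for $a<b$ (equivalently, by the data $\Phi(w)$). The governing intuition is a dimension count: the parameters live in a space that, after accounting for one algebraic constraint, has dimension $m^2-1$.

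For the upper bound, I would count the admissible parameter tuples directly. Fix $n=|w|$. The Parikh vector $\Psi(w)=(|w|_1,\dots,|w|_m)$ ranges over nonnegative integers summing to $n$, contributing at most $\binom{n+m-1}{m-1}\in\mathcal{O}(n^{m-1})$ possibilities. Once the Parikh vector is fixed, each coefficient $\binom{w}{ab}$ with $a<b$ satisfies $0\le \binom{w}{ab}\le |w|_a|w|_b\le n^2$, so there are $\mathcal{O}(n^2)$ choices for each of the $\binom{m}{2}=\tfrac{m(m-1)}{2}$ such coefficients. Since a class is determined by the Parikh vector and these coefficients, the total count is
$$
\# (\alphabet^n/\!\sim_2)\le \mathcal{O}\!\left(n^{m-1}\right)\cdot \mathcal{O}\!\left(n^{2\cdot \frac{m(m-1)}{2}}\right)=\mathcal{O}\!\left(n^{m-1+m(m-1)}\right)=\mathcal{O}\!\left(n^{m^2-1}\right).
$$
This is clean because the exponent collapses exactly: $m-1+m(m-1)=(m-1)(m+1)=m^2-1$.

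The harder direction is the lower bound $\Omega(n^{m^2-1})$, where I must produce $\Omega(n^{m^2-1})$ pairwise $\sim_2$-inequivalent words of length $n$. The natural strategy is to realize a full-dimensional box of attainable parameter values. First restrict to balanced Parikh vectors, say each $|w|_a\approx n/m$, which already gives $\Omega(n^{m-1})$ distinct Parikh vectors that each vary by $\Theta(n)$ in $m-1$ independent coordinates. Then, fixing such a Parikh vector, I want to show that the vector of cross-coefficients $\big(\binom{w}{ab}\big)_{a<b}$ attains $\Omega(n^2)$ distinct values in each of its $\binom{m}{2}$ coordinates \emph{independently}, which would contribute a factor $\Omega\!\big(n^{2\cdot \frac{m(m-1)}{2}}\big)=\Omega(n^{m(m-1)})$ and multiply out to the claimed $\Omega(n^{m^2-1})$. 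The key technical point, and the main obstacle, is this independent controllability: I need a construction that lets me tune each $\binom{w}{ab}$ separately over a range of width $\Theta(n^2)$ without the choices interfering. A convenient device is to build $w$ as a concatenation of blocks, one block per unordered pair $\{a,b\}$, where the block devoted to $\{a,b\}$ is a word in $\{a,b\}^*$ of the form $a^{s}b^{t}a^{|w|_a/\ldots}\cdots$ (a two-letter word over $\{a,b\}$) whose $\binom{\cdot}{ab}$ value can be set anywhere in $[0,\,\Theta(n^2)]$ by sliding the boundary between the $a$'s and $b$'s, as in the binary analysis of Section~\ref{sec:22}; letters from other pairs, placed in separate blocks, contribute in a controlled, additive way to the cross-coefficients via the block-additivity formula
$$
\binom{st}{ab}=\binom{s}{ab}+\binom{t}{ab}+|s|_a\,|t|_b,
$$
so that the interference terms are constants once the block sizes are fixed.

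To make the lower bound rigorous I would verify that the $\binom{m}{2}$ block-local coefficients, together with the $m-1$ free Parikh coordinates, can be chosen from product sets of sizes $\Theta(n^2)$ and $\Theta(n)$ respectively so that distinct choices yield distinct values of the full invariant $\big(\Psi(w),(\binom{w}{ab})_{a<b}\big)$; by the earlier characterization these words are then pairwise $\sim_2$-inequivalent. The delicate bookkeeping is ensuring the additive interference between blocks (the $|s|_a|t|_b$ terms) is fully determined by the fixed block lengths, so that adjusting one block's internal coefficient changes only one coordinate of the invariant and leaves the others fixed; this decoupling is what upgrades a crude counting bound into a genuine product lower bound. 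Combining the two directions gives $\# (\alphabet^n/\!\sim_2)\in\Theta(n^{m^2-1})$.
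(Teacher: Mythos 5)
Your proposal is correct and follows essentially the same route as the paper: the identical parameter-counting upper bound, and a lower bound built from one binary block per unordered pair $\{a,b\}$ (each letter's count split evenly among its $m-1$ blocks), using the additivity formula $\binom{st}{ab}=\binom{s}{ab}+\binom{t}{ab}+|s|_a|t|_b$ to show the cross-block contributions depend only on the Parikh data, so the $\binom{m}{2}$ coefficients range independently over intervals of length $\Theta(n^2)$. The paper's proof fills in the bookkeeping you flag (floors, remainder padding, and the restriction to Parikh vectors with all $x_i\ge \frac{n}{2m}+m$), but the ideas coincide.
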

\begin{proof}
Let $f$ be the function such that for any $x\in\mathbb{N}^m$, $$f(x)=\# (\{u\in\alphabet^*: \Psi(u)=x\} /\!\sim_2).$$ 
In other words, $f(x)$ counts the number of $2$-binomial equivalence classes whose Parikh vector is $x$.
Let $||\!\cdot\!||_1:\mathbb{R}^d\rightarrow\mathbb{R}$ be the $\ell_1$-norm (i.e., for all vectors $v$, $||v||_1=\sum_{i=1}^d|v_i|$).
Clearly for all $n$, 
\begin{equation}\label{functionoff}
 \# (\alphabet^n /\!\sim_2) =\sum_{x\in\mathbb{N}^m, ||x||_1=n}f(x).
\end{equation}

For any $a,b\in\alphabet$, $a<b$, and $u\in\alphabet^*$, $\binom{u}{ba}= |u|_a|u|_b-\binom{u}{ab}$ and  $\binom{u}{aa}=\binom{|u|_a}{2}$. 
Any word $u$ has its equivalence class uniquely determined by  the values of $|u|_a$ for all $a\in\alphabet$ and $\binom{u}{ab}$ for all $a<b\in\alphabet$.
Moreover, for all $u\in\alphabet^*$ and $a<b\in\alphabet$, $\binom{u}{ab}\le|u|_a|u|_b$.
We deduce that for all $x=(x_1,\ldots,x_m)\in\mathbb{N}^m$, $$f(x)\le\prod_{1\le a<b \le m}x_ax_b\le \prod_{1\le a<b\le m}||x||_1^2
\le ||x||_1^{m(m-1)}.$$
From equation \eqref{functionoff}, we get that
\begin{align*}
  \# (\alphabet^n /\!\sim_2) &\le\sum_{x\in\mathbb{N}^m, ||x||_1=n}||x||_1^{m(m-1)}\\
  \# (\alphabet^n /\!\sim_2) &\le n^{m(m-1)} \;\; \# \left\{  x\in\mathbb{N}^m \, : \, ||x||_1=n \right\} \\
  \# (\alphabet^n /\!\sim_2) &\le n^{m(m-1)}  (n+1)^{m-1}  \le (n+1)^{m^2-1}.
\end{align*}
We conclude that $\# (\alphabet^n /\!\sim_2) \in \mathcal{O} \left(n^{m^2-1} \right)$ when $n \rightarrow + \infty$. It remains to get a convenient lower bound. 

For any $a,b\in\alphabet$, $a\neq b$, and $i,j\in\mathbb{N}$, let 
$$L_{a,b,i,j}=\{u\in\{a,b\}^*:|u|_a=i, |u|_b=j\}.$$
Considering all possible letter exchanges as in \eqref{eq:moinsUn} from $a^ib^j$ to $b^ja^i$, 
the binomial coefficient $\binom{u}{ab}$ decreases by $1$ at every step from $ij$ to $0$, we thus have 
\begin{align}\label{eq:card}
\left\{\binom{u}{ab}:u\in L_{a,b,i,j}\right\} =\left\{ 0, 1, \ldots, ij\right\}
\end{align}
which is a set of cardinality $ij+1$.
For any $x\in\mathbb{N}^m$, let us consider the following language 
$$L(x)=\left(\prod_{a=1}^m \prod_{b=a+1}^m L_{a,b,\left\lfloor\frac{x_a}{m-1}\right\rfloor,\left\lfloor\frac{x_b}{m-1}\right\rfloor}\right)
\prod_{a=m}^{1} a^{x_a \% (m-1)},$$
where the products must be understood as languages concatenations, the indices of the last product are taken in decreasing order, and $x \% y$ is the remainder of the Euclidean division of $x$ by $y$.

For instance for $m=3$, 
$$L(x)=\left\{u_{1,2}u_{1,3}u_{2,3}r_3r_2r_1:\forall a,b\in \alphabet,  u_{a,b}\in L_{a,b,\left\lfloor\frac{x_a}{2}\right\rfloor,\left\lfloor\frac{x_b}{2}\right\rfloor}, r_a=a^{x_a \% 2}\right\}.$$


Roughly speaking, for every $a<b$ and $u\in L(x)$, we will show that $\binom{u}{ab}$ mostly depends on $u_{a,b}$ and takes a quadratic number of values (when choosing $u_{a,b}$ accordingly). Furthermore, the role of $r_a$ words is limited to padding. Indeed, observe that for all $u\in L(x)$, $\Psi(u)=x$.

Let $x\in\mathbb{N}^m$ and $u\in L(x)$. Then, by definition, there exist words $u_{1,2},u_{1,3},\ldots, u_{m-1,m}$ with, for all $a<b$, $u_{a,b}\in L_{a,b,\left\lfloor\frac{x_a}{m-1}\right\rfloor,\left\lfloor\frac{x_b}{m-1}\right\rfloor}$, such that
$$u=\left(\prod_{a=1}^m \prod_{b=a+1}^m u_{a,b}\right)\prod_{a=m}^1 a^{x_a \% (m-1)}.$$
Let $i$ and $j$ be two integers such that $1\le i<j \le m$ and let us compute the binomial coefficient associated with $ij$. A subword $ij$ either occurs in a single factor of the above product (the first two terms below), or $i$ and $j$ appear in two different factors:
\begin{align*}
\binom{u}{ij}&=\sum_{a=1}^m \sum_{b=a+1}^m \binom{u_{a,b}}{ij}+\sum_{a=m}^1 \binom{a^{x_a \% (m-1)}}{ij}\\
&+\sum_{a<b\in\alphabet}\left|u_{a,b}\right|_i\left(\sum_{\substack{a'<b'\in\alphabet\\(a',b')>(a,b)}} \left|u_{a',b'}\right|_j+\sum_{b\in\alphabet} \left|b^{x_b \% (m-1)}\right|_j\right)\\
&+\sum_{a=1}^m \sum_{b=1}^{a-1}\left|a^{x_a \% (m-1)}\right|_i\left|b^{x_b \% (m-1)}\right|_j.
\end{align*}
Observe that by definition of $L(x)$, the second and last terms vanish. Hence, 
\begin{align*}
\binom{u}{ij}&= \binom{u_{i,j}}{ij}
+ \underbrace{   \sum_{\substack{a<b\in\alphabet \\ a=i \text{ or }  b=i}}\left\lfloor\frac{x_i}{m-1}\right\rfloor
\left( \left(\sum_{\substack{a'<b' \in \alphabet \\ (a',b') > (a,b) \\ a'=j \text{ or }  b'=j}} \left\lfloor\frac{x_j}{m-1}\right\rfloor \right) +x_j \%(m-1)\right) }_{:=h_{i,j}(x)}
\end{align*}
The second term of the latter expression is uniquely a function of $x$ (there is no dependency on $u$) while, from \eqref{eq:card}, 
$$\left\{\binom{u_{i,j}}{ij}:u_{i,j}\in L_{i,j, \left\lfloor\frac{x_i}{m-1}\right\rfloor, \left\lfloor\frac{x_j}{m-1}\right\rfloor}\right\} =\left\{0,1,\ldots,
\left\lfloor\frac{x_i}{m-1}\right\rfloor\left\lfloor\frac{x_j}{m-1}\right\rfloor+1\right\}.$$
Thus for a fixed $x$, considering all $u \in L(x)$, $\binom{u}{ij}$ can take $\left\lfloor\frac{x_i}{m-1}\right\rfloor\left\lfloor\frac{x_j}{m_1}\right\rfloor+1$ different values. Moreover, for all $\left(a_{1,2}, a_{1,3}, \ldots, a_{(m-1),m} \right)$ such that 
$$
a_{i,j} \in \left\{ h_{i,j}(x), \ldots, h_{i,j}(x) + \left\lfloor\frac{x_i}{m-1}\right\rfloor\left\lfloor\frac{x_j}{m-1}\right\rfloor \right\} \qquad \forall i < j,
$$
there exists $u \in L(x)$ such that $\binom{u}{ij} = a_{i,j}$ for all $i<j$. We deduce that, for all~$x$,
$$f(x)\ge\prod_{a<b\in\alphabet}\left( \left\lfloor\frac{x_a}{m-1}\right\rfloor\left\lfloor\frac{x_b}{m-1}\right\rfloor+1 \right).$$
By equation \eqref{functionoff}, we finally get the lower bound:
\begin{align*}
  \# (\alphabet^n /\!\sim_2) &\ge\sum_{\substack{x\in\mathbb{N}^m \\ ||x||_1=n}}\prod_{a<b\in\alphabet}\left( \left\lfloor\frac{x_a}{m-1}\right\rfloor\left\lfloor\frac{x_b}{m-1}\right\rfloor+1 \right)\\
&\ge\sum_{\substack{x\in\mathbb{N}^m\\ ||x||_1=n\\ \forall i, x_i\ge\frac{n}{2m}+m}}\prod_{a<b\in\alphabet}  \left\lfloor\frac{x_a}{m-1}\right\rfloor\left\lfloor\frac{x_b}{m-1}\right\rfloor\\
&\ge\sum_{\substack{x\in\mathbb{N}^m\\ ||x||_1=n\\ \forall i, x_i\ge\frac{n}{2m}+m}}\prod_{a<b\in\alphabet}\left(\frac{n}{2m(m-1)}\right)^2\\
&\ge\left(\frac{n}{2m(m-1)}\right)^{m(m-1)} \; \# \left\{ x\in\mathbb{N}^m : ||x||_1=n \ \wedge \  \forall i, x_i\ge\frac{n}{2m}+m   \right\}.
\end{align*}
The latter set contains the set
$$
\left\{ x\in\mathbb{N}^m : ||x||_1=n  \quad \wedge \quad  \forall i, x_i\ge\frac{n}{2m}+m \quad  \wedge \quad  \forall i < m, x_i \le \frac{n}{m}   \right\}.
$$
For $n$ large enough (i.e., $\frac{n}{2m} + m \leq \frac{n}{m}$), the cardinal of this set is
$$
\left( \frac{n}{2m} - m +1 \right)^{m-1} \in \Theta (n^{m-1}).
$$
Moreover,
\begin{align*}
\left( \frac{n}{2m(m-1)} \right) ^{m(m-1)} \in \Theta\left(n^{m(m-1)}\right)
\end{align*}
and we conclude that
$\# (\alphabet^n /\!\sim_2) \in \Theta \left(n^{m^2 - 1}\right)\,.$
\end{proof}

\begin{Remark}
Note that even though the growth of $\# (\{1,2,3\}^n /\!\sim_2)$ is polynomial, this quantity is not a polynomial. It is easy to verify by interpolating the $9$ first values.
A similar remark can be obtained for $\# (\{1,2\}^n /\!\sim_3)$ whose first values can be found as entry \texttt{A258585} in Sloane's encyclopedia.
\end{Remark}

\section{Non context-freeness}\label{sec:5}
In this section, we show that for any alphabet $\alphabet$ of size at least $3$ and for any $k \geq 2$, the languages  $\LL(\sim_k,\alphabet)$ and $\sing( \sim_k,\alphabet)$ are not context-free.

Let $L\subseteq\alphabet^*$ be a language. The {\em growth function} of $L$ maps the integer $n$ to $\# (L\cap\alphabet^n)$. A language has a {\em polynomial growth} if there exists a polynomial $p$ such that $\# (L\cap\alphabet^n)\le p(n)$ for all $n\ge 0$. 
Recall that a language $L$ is {\em bounded} if there exist words $w_1,\ldots,w_\ell\in\alphabet^*$ such that $L\subseteq w_1^*w_2^*\cdots w_\ell^*$. Ginsburg and Spanier have obtained many results about bounded context-free languages, see \cite{GS}. We will make use of the following result. For relevant bibliographic pointers see, for instance, \cite{Gaw}. 
\begin{Proposition}\label{prop:nonContext}
A context-free language is bounded if and only if it has a polynomial growth.    
\end{Proposition}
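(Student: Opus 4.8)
The plan is to prove the final statement, Proposition~\ref{prop:nonContext}, which asserts the equivalence between boundedness and polynomial growth for context-free languages. I will prove the two implications separately, since one direction is elementary and the other relies on the structural theory of bounded languages developed by Ginsburg and Spanier.

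First I would dispose of the easy direction: a bounded language has polynomial growth, and this requires no context-freeness assumption at all. Suppose $L\subseteq w_1^*w_2^*\cdots w_\ell^*$ with each $w_i\in\alphabet^*$. Every word of $L$ has the form $w_1^{e_1}w_2^{e_2}\cdots w_\ell^{e_\ell}$ for some exponents $e_i\ge 0$, and its length is $\sum_{i=1}^\ell e_i|w_i|$. A word of length exactly $n$ therefore forces $\sum e_i|w_i|=n$, so in particular each $e_i\le n$ (assuming the $w_i$ are nonempty; empty $w_i$ contribute nothing and can be discarded). Hence the number of admissible exponent tuples, and a fortiori $\#(L\cap\alphabet^n)$, is at most $(n+1)^\ell$, a polynomial in $n$. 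This establishes that boundedness implies polynomial growth.

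The substantial direction is the converse: a context-free language of polynomial growth must be bounded. Here I would invoke the theory of Ginsburg and Spanier \cite{GS} rather than reprove it from scratch. The key structural fact is that for context-free languages, boundedness is decidable and tightly linked to the growth behaviour governed by the pumping lemma: if $L$ is context-free and \emph{not} bounded, then one can exhibit a derivation pattern producing, via two independent pumping loops at incomparable positions, exponentially many distinct words of bounded length --- concretely, a subset of the form $\{xu^iyv^jz : i,j\ge 0\}$ where the pairs $(u,v)$ generate words that are pairwise distinct, forcing the growth function to be at least exponential. Thus a context-free language with sub-exponential (in particular polynomial) growth cannot harbour such a pattern and must be bounded. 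The cleanest route is to cite the Ginsburg--Spanier characterization directly: for context-free languages the dichotomy is between polynomial growth (equivalently, bounded) and exponential growth, with nothing in between.

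The main obstacle is that the nontrivial implication genuinely depends on the deep combinatorial structure of context-free grammars, and a fully self-contained argument would require reconstructing the Ginsburg--Spanier machinery on bounded languages together with a growth dichotomy that rules out intermediate rates. Since the statement is quoted as a known result with the reference \cite{GS} and the pointer \cite{Gaw}, the honest and appropriate treatment here is to attribute it and sketch only the pumping-based intuition for the hard direction, while giving the full elementary proof for the easy direction as above. I would therefore present the easy direction in detail and, for the converse, remark that it follows from the structure theory of bounded context-free languages in \cite{GS}, with the growth dichotomy made explicit in the survey \cite{Gaw}.
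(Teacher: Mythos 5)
The paper offers no proof of this proposition at all: it is quoted as a known result, with \cite{GS} and \cite{Gaw} given as pointers, so there is no internal argument to compare yours against. Your treatment --- a complete elementary proof of the easy direction plus a citation for the hard one --- is therefore entirely consistent with the paper's use of the statement, and your argument that $L\subseteq w_1^*\cdots w_\ell^*$ forces $\#(L\cap\Sigma^n)\le (n+1)^\ell$ is correct (indeed it needs no context-freeness, as you note). One caveat concerns the intuition you sketch for the converse: a set of the form $\{xu^iyv^jz : i,j\ge 0\}$ contains only $O(n^2)$ words of length at most $n$, so two independent pumping exponents do not by themselves produce exponential growth. The mechanism actually underlying the polynomial/exponential dichotomy (see Incitti \cite{CF} in the paper's own bibliography, or \cite{Gaw}) is the presence, in an unbounded context-free language, of a structure of the shape $x\,t_1t_2\cdots t_k\,y$ with each $t_i\in\{u,v\}$ for some pair satisfying $uv\neq vu$, i.e.\ a free submonoid on two generators embedded via the derivations; this yields $2^k$ distinct words of length $O(k)$. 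Since you explicitly defer the hard direction to the literature rather than claim your sketch as a proof, this imprecision does not invalidate the proposal, but the heuristic as you state it is not the one that works.
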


We easily deduce\footnote{$\sing(\sim_k,\alphabet)\subseteq\LL(\sim_k,\alphabet)$ and $\LL(\sim_k,\alphabet)$ is in one-to-one correspondence with $\alphabet^*\!/\!\sim_k$.} from the previous section that both languages $\LL(\sim_k,\alphabet)$ and $\sing( \sim_k,\alphabet)$ have polynomial growth; it is thus enough to show that they are not bounded to infer that they are not context-free. Observe that, in our forthcoming reasonings, we will define particular words $\rho_{p,n}$ over a ternary alphabet (they can trivially be seen as words over a larger alphabet). 

\begin{Definition}
Fix a sequence $(s_n)_{n\ge 1}$ of positive integers such that, for all $n \in \N$,
 \begin{align*}
 &\sqrt{\frac{s_{n}}{2}} \in \mathbb{N},\tag{D1}\label{d2}\\
  &s_{n}>\left(\sqrt{\frac{s_{n}}{2}}+\sum\limits_{i=1}^{n-1}s_i\right)^2,\label{d1}\tag{D2}\\
  &\sqrt{\frac{s_{n}}{2}} >\left(\sum_{i=1}^{n-1}s_{i}\right)\left(\sum_{i=1}^{n-3}s_{i}\right).\tag{D3}\label{d3}
\end{align*}
For instance, to get a sequence with those prescribed properties, one can choose $$s_n=2\times 8^{8^n}.$$
For any integers $n$ and $p$, let us define the word $$\rho_{p,n}=1^{p}2^{s_{n-1}}3^{s_{n-2}}1^{s_{n-3}}\cdots a^{s_1}$$
over $\{1,2,3\}$, where $a \equiv n \pmod 3$.
\end{Definition}

In Section~\ref{subs:sing}, we will prove that the $\sim_2$-class of any $\rho_{p,n}$ is a singleton. Then, it is proven in Section~\ref{subs:notBounded} that $\{ \rho_{p,n} \mid p,n \in \N \}$ is not a bounded language. Putting together these results, we get the following.

\begin{Theorem}
For any alphabet $\alphabet$ of size at least $3$ and for any $k \geq 2$, the languages $\LL(\sim_k, \alphabet)$ and $\sing(\sim_k, \alphabet)$ are not context-free.
\end{Theorem}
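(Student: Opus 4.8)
The plan is to reduce the theorem to two sub-claims, exactly as the authors have foreshadowed. Since the previous section establishes that $\#(\alphabet^n/\!\sim_k)$ grows polynomially, both $\LL(\sim_k,\alphabet)$ and $\sing(\sim_k,\alphabet)$ have polynomial growth (the latter is a subset of the former, and the former is in bijection with $\alphabet^*/\!\sim_k$). By Proposition~\ref{prop:nonContext}, a context-free language of polynomial growth must be bounded, so it suffices to exhibit an unbounded subset of $\sing(\sim_k,\alphabet)$. Since $\sing(\sim_k,\alphabet)\subseteq\sing(\sim_2,\alphabet)$ for $k\ge 2$ (more subwords to match only shrinks a singleton class), and since words over $\{1,2,3\}$ embed into any larger alphabet, it is enough to produce an unbounded family of $\sim_2$-singletons over the ternary alphabet. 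The words $\rho_{p,n}$ are the intended witnesses.

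First I would verify the singleton claim: I want to show that each $\rho_{p,n}$ is the unique element of its $\sim_2$-class. By the Corollary of Section~\ref{ss31}, the class $[\rho_{p,n}]_{\sim_2}$ is a singleton exactly when no \emph{nontrivial} sequence of exchanges $ba\mapsto ab$ (equivalently $ab\mapsto ba$) can be applied while preserving all the cross-counts $\binom{\cdot}{ab}$; concretely, a class has size one iff the only word abelian-equivalent to $\rho_{p,n}$ with the same values $\binom{\cdot}{12},\binom{\cdot}{13},\binom{\cdot}{23}$ is $\rho_{p,n}$ itself. This is where conditions \eqref{d1}, \eqref{d2}, \eqref{d3} enter: the block lengths $s_{n-1},s_{n-2},\ldots$ are chosen to grow so fast that the values $\binom{\rho_{p,n}}{ab}$ are ``separated'' enough that any rearrangement of the letters is forced to change at least one of them. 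I would argue this by assuming some $v\sim_2\rho_{p,n}$ with $v\ne\rho_{p,n}$ and deriving a contradiction from the gap inequalities — essentially the super-quadratic growth in \eqref{d1} and the product bound in \eqref{d3} rule out the possibility of compensating a change in one pair-count by a change in another.

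Second, I would prove that $\{\rho_{p,n}\mid p,n\in\N\}$ is not bounded. Suppose for contradiction it were contained in $w_1^*\cdots w_\ell^*$ for fixed words $w_1,\ldots,w_\ell$. The key structural feature of $\rho_{p,n}$ is that, reading right to left, it cycles through the letters $\ldots,3,1,2$ with block lengths $s_1,s_2,\ldots,s_{n-1}$ that increase very rapidly, so the \emph{number of maximal letter-blocks} in $\rho_{p,n}$ is $n$, which is unbounded as $n\to\infty$. A word lying in $w_1^*\cdots w_\ell^*$ can only have a bounded number of ``block boundaries'' coming from each $w_i^*$ factor up to the internal period structure; the rapid, non-periodic growth of the block lengths (guaranteed by \eqref{d1}) contradicts any fixed periodic decomposition. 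I would make this precise by a counting/pumping argument on block-length sequences.

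The main obstacle is the singleton proof, not the unboundedness. Showing non-boundedness is a fairly standard consequence of having unboundedly many distinct block lengths. The delicate part is proving that the three arithmetic conditions on $(s_n)$ genuinely force $\binom{\rho_{p,n}}{12},\binom{\rho_{p,n}}{13},\binom{\rho_{p,n}}{23}$ to pin down the word uniquely: one must track how an exchange $ab\mapsto ba$ in a word of this block structure perturbs each of the three cross-counts, and verify that the perturbations cannot simultaneously cancel. I expect the role of \eqref{d3} (the product $(\sum s_i)(\sum s_i)$ bound) to be precisely to control the interaction between the $13$- and $23$-counts, while \eqref{d1} handles the dominant $12$-count, so the heart of the argument is a careful case analysis of which block a moved letter can migrate into.
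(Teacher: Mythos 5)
Your overall architecture coincides with the paper's: polynomial growth of $\#(\alphabet^n/\!\sim_k)$ plus Proposition~\ref{prop:nonContext} reduces everything to exhibiting an unbounded set of $\sim_2$-singletons over $\{1,2,3\}$, and the witnesses are the words $\rho_{p,n}$. (One small slip: you wrote $\sing(\sim_k,\alphabet)\subseteq\sing(\sim_2,\alphabet)$, but the inclusion you actually need and that your parenthetical justifies is the reverse one, $\sing(\sim_2,\alphabet)\subseteq\sing(\sim_k,\alphabet)$, since $\sim_k$ refines $\sim_2$ and hence $[w]_{\sim_k}\subseteq[w]_{\sim_2}$.)

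The genuine gap is in the singleton claim, which you correctly identify as the heart of the matter but then sketch with a strategy that is not the one that works. You propose a direct perturbation argument: assume $v\sim_2\rho_{p,n}$ with $v\neq\rho_{p,n}$ and do ``a careful case analysis of which block a moved letter can migrate into.'' But an arbitrary word abelian-equivalent to $\rho_{p,n}$ is not reachable by tracking single letter migrations, and there is no a priori bound on how far it can be from $\rho_{p,n}$; a cancellation analysis of individual exchanges does not obviously close. The paper instead proves a \emph{strengthened} statement (Proposition~\ref{mainlemma}) by induction on the number of blocks $n$: no $u\neq\rho_{p,n}$ with $\Psi(u)=\Psi(\rho_{p,n})$ can satisfy the three \emph{inequalities} $\binom{u}{12}\ge\binom{\rho_{p,n}}{12}$, $\binom{u}{23}\ge\binom{\rho_{p,n}}{23}$, $\binom{u}{31}\ge\binom{\rho_{p,n}}{31}$ simultaneously. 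The inequalities (rather than $\sim_2$-equivalence) are essential: one cuts $u=vw$ just before the end of the $2$-block, uses \eqref{d1} and \eqref{d3} to force $|v|_1=p$ and $|v|_3=0$, deduces that $w$ satisfies the same three inequalities relative to the corresponding suffix $w'$ of $\rho_{p,n}$, and then applies the cyclic letter permutation $\sigma$ sending $w'$ to $\rho_{\sqrt{s_{n-1}/2},\,n-1}$ so that the induction hypothesis applies. Without this inductive set-up and the weakening from equalities to inequalities, your plan has no mechanism for handling the tail of the word. A secondary, smaller issue: in the unboundedness argument, the case $nb(w_i)=2$ is not covered by ``rapid non-periodic growth of block lengths,'' since $w_i^{n_i}$ could contribute many blocks; the paper disposes of it by noting that $w_iw_i$ would force two of the three letters to alternate, which never happens in any $\rho_{p,n}$, so $n_i\le 1$.
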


\begin{proof}
First note that $$
\{ \rho_{p,n} \mid p,n \in \N \} \subseteq \{1,2,3\}^* \subseteq \alphabet^*.$$
Taking into account Corollary~\ref{cor:difu}, observe that
$$
\{ \rho_{p,n} \mid p,n \in \N \} \subseteq \sing(\sim_k, \alphabet) \subseteq \LL(\sim_k, \alphabet).
$$

From Proposition~\ref{prop:growth}, the languages $\LL(\sim_k, \alphabet)$, and thus $\sing(\sim_k, \alphabet)$, have a polynomial growth.
From Lemma~\ref{lem:notBounded}, the language $\{ \rho_{p,n} \mid p,n \in \N \}$ is not bounded.
Therefore, $\sing(\sim_k, \alphabet)$ and $\LL(\sim_k, \alphabet)$ are not bounded and we conclude from Proposition~\ref{prop:nonContext}.
\end{proof}

\begin{Remark}
This result is in fact true for all languages having exactly one representant of each $\sim_k$-class.
\end{Remark}

\subsection{A family of singletons}\label{subs:sing}

\begin{Proposition}\label{mainlemma}
For any two positive integers $n$ and $p$ and word $u$, at least one of the following is false:
\begin{itemize}
  \item $u\not=\rho_{p,n}$,
  \item $\Psi(u)=\Psi(\rho_{p,n})$,
  \item $\binom{u}{12}\ge\binom{\rho_{p,n}}{12}$,
  \item $\binom{u}{23}\ge\binom{\rho_{p,n}}{23}$,
  \item $\binom{u}{31}\ge\binom{\rho_{p,n}}{31}$.
\end{itemize}
\end{Proposition}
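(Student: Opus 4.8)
The plan is to prove the contrapositive packaging of the statement: assuming the last four bullets hold — that is, $\Psi(u)=\Psi(\rho_{p,n})$ together with $\binom{u}{12}\ge\binom{\rho_{p,n}}{12}$, $\binom{u}{23}\ge\binom{\rho_{p,n}}{23}$ and $\binom{u}{31}\ge\binom{\rho_{p,n}}{31}$ — I would deduce $u=\rho_{p,n}$, so that the first bullet $u\neq\rho_{p,n}$ fails. Write $(P_1,P_2,P_3)=\Psi(\rho_{p,n})$. Because $u$ and $\rho_{p,n}$ are abelian equivalent, the identity $\binom{w}{ab}+\binom{w}{ba}=|w|_a|w|_b$ (a consequence of \eqref{eq:remobv}) lets me recast the three hypotheses purely as upper bounds on the three \emph{cyclic-backward} coefficients:
$$\binom{u}{21}\le\binom{\rho_{p,n}}{21},\qquad \binom{u}{32}\le\binom{\rho_{p,n}}{32},\qquad \binom{u}{13}\le\binom{\rho_{p,n}}{13}.$$
The task then becomes: among all words with Parikh vector $(P_1,P_2,P_3)$, the word $\rho_{p,n}$ is the unique coordinatewise (Pareto) minimizer of this backward triple. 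This encodes a \emph{cyclic frustration} — the three counts would like the $1$'s before the $2$'s, the $2$'s before the $3$'s, and the $3$'s before the $1$'s, which no word achieves simultaneously — and $\rho_{p,n}$ is the rigid optimal compromise.

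First I would compute the six length-two coefficients of $\rho_{p,n}$ from its block factorization $1^{p}2^{s_{n-1}}3^{s_{n-2}}1^{s_{n-3}}\cdots$, whose consecutive blocks always carry cyclically consecutive letters. A pair split across blocks $B_i$ (earlier) and $B_j$ (later) contributes to a forward coefficient when $j-i\equiv 1\pmod 3$ and to a backward one when $j-i\equiv 2\pmod 3$; summing block-size products yields closed forms with leading terms $\binom{\rho_{p,n}}{21}\approx s_{n-1}s_{n-3}$, $\binom{\rho_{p,n}}{32}\approx s_{n-2}s_{n-4}$ and $\binom{\rho_{p,n}}{13}\approx p\,s_{n-2}$, whereas the forward coefficients are driven by the much larger adjacent-block products $s_{n-1}s_{n-2}$ and $p\,s_{n-1}$. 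The decisive structural input is that \eqref{d1}, read as ``the next block size dominates the square of the sum of all previous ones'', makes every backward count negligible next to these adjacent products. Consequently the slack in each backward inequality is smaller than the cost of a single top-level transposition: placing even one $3$ ahead of the full block of $s_{n-1}$ two's creates $\approx s_{n-1}$ new $(3,2)$-pairs, which by \eqref{d1} already overruns the entire budget $\binom{\rho_{p,n}}{32}\approx s_{n-2}s_{n-4}$. This forces the two largest blocks of $u$ into exactly the positions they occupy in $\rho_{p,n}$.

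The cleanest instance of the counting at the heart of the argument is the case where one backward budget is genuinely zero (for small $n$, when $u$ has at most one block of $2$'s and one of $3$'s one gets $\binom{\rho_{p,n}}{32}=0$): then all $2$'s of $u$ precede all its $3$'s, and I classify each occurrence of $1$ by $(\alpha,\beta)=(\#\,2\text{'s before it},\ \#\,3\text{'s after it})$. A $1$ weakly left of the $3$-region has $\beta=P_3$ and costs $P_3$ to $\binom{u}{13}$, while a $1$ weakly right of the $2$-region has $\alpha=P_2$ and costs $P_2$ to $\binom{u}{21}$; feeding the tight budgets $\binom{\rho_{p,n}}{13}=p\,P_3$ and $\binom{\rho_{p,n}}{21}=(P_1-p)P_2$ into these counts forces precisely $p$ of the $1$'s to the extreme left and the remaining $P_1-p$ to the extreme right, with none on the boundary, which is exactly $\rho_{p,n}$. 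For general $n$ the backward budgets are positive but, by the previous paragraph, negligible against the top-level products, so I would run this region-counting inductively: once the two largest blocks are pinned, I peel off $2^{s_{n-1}}3^{s_{n-2}}$ and absorb $1^{s_{n-3}}$ into the initial $1^{p}$, obtaining a residual word of the very same form $\rho_{p+s_{n-3},\,n-3}$ to which the induction hypothesis applies. The index dropping by $3$ is exactly what the shifted sum $\sum_{i\le n-3}s_i$ in \eqref{d3} is tailored to, and the growth gaps guarantee that the cross-terms between the peeled prefix and the residue are too small to buy any extra freedom.

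The main obstacle is rigidity for an \emph{arbitrary} competitor $u$: unlike $\rho_{p,n}$, the word $u$ need not be a concatenation of monochromatic blocks, so the region/type bookkeeping must be carried through with honest error terms rather than on clean blocks, and one must show that every deviation from the $\rho_{p,n}$-arrangement forces a strict increase in at least one backward coefficient that outruns the cumulative slack. Making this estimate airtight is precisely where the sharp thresholds built into (D1)--(D3), and in particular the balanced quantity $\sqrt{s_n/2}$ that calibrates the second-order accounting, become indispensable; checking that these constants are chosen tightly enough is the crux of the proof.
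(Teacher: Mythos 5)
There is a genuine gap: what you have written is a plan whose central step is explicitly deferred. Your reformulation (with the same Parikh vector, the three hypotheses become $\binom{u}{21}\le\binom{\rho_{p,n}}{21}$, $\binom{u}{32}\le\binom{\rho_{p,n}}{32}$, $\binom{u}{13}\le\binom{\rho_{p,n}}{13}$, so the claim is that $\rho_{p,n}$ is the unique word weakly minimizing this backward triple) is correct, and your region-counting for the case $\binom{\rho_{p,n}}{32}=0$ is essentially a valid base case. But the inductive step hinges on first ``pinning the two largest blocks of $u$ into exactly the positions they occupy in $\rho_{p,n}$,'' and you concede in your final paragraph that carrying this out against an arbitrary, non-blocky competitor $u$ ``is the crux of the proof'' --- i.e., it is not done. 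The one quantitative argument you offer for it (a single $3$ placed ahead of the full block of $2$'s creates $\approx s_{n-1}$ new $(3,2)$-pairs) does not survive partial displacements: a $3$ inserted near the right end of the $2$-block creates only a few such pairs, so deviations cannot be ruled out one letter at a time; one needs a global prefix-counting argument with the error terms you allude to but do not supply.

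It is also worth noting that exact block-pinning is a stronger intermediate claim than the proof actually needs, and the paper avoids it. The paper cuts $u=vw$ at the specific position $p+s_{n-1}-\sqrt{s_{n-1}/2}$ (inside the $2$-block) and proves only that $|v|_1=p$ and $|v|_3=0$ --- the prefix's Parikh vector, not its arrangement; the case $v\neq v'$ is absorbed because it forces a strict inequality $\binom{w}{12}>\binom{w'}{12}$ downstream. Then the cyclic permutation $\sigma:1\mapsto 3,\,2\mapsto 1,\,3\mapsto 2$ turns the suffix $w'$ into $\rho_{\sqrt{s_{n-1}/2},\,n-1}$, so the induction drops $n$ by $1$ with the leftover $\sqrt{s_{n-1}/2}$ twos playing the role of the new free prefix parameter $p$ --- this, not ``second-order accounting,'' is why \eqref{d2} and \eqref{d3} are phrased in terms of $\sqrt{s_n/2}$. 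Your scheme (peel off $2^{s_{n-1}}3^{s_{n-2}}$, merge $1$-blocks, recurse on $\rho_{p+s_{n-3},n-3}$) keeps the letters' roles fixed and drops $n$ by $3$, which forces you to pin three blocks exactly before recursing and to control the cross-terms between the peeled part and the residue; neither is established. To repair the proposal you would need to replace the block-pinning step by a prefix Parikh-vector argument of the paper's type (or prove the rigidity estimate you postponed), at which point the $n\mapsto n-1$ induction with letter renaming is the natural route.
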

As an immediate corollary, we get the following result.
\begin{Corollary}\label{cor:difu}
For any two positive integers $n$ and $p$ and word $u$ such that
 $u\not=\rho_{p,n}$, we have $u\not\sim_2\rho_{p,n}$.
\end{Corollary}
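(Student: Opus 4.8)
The statement to prove is the Corollary: for any positive integers $n,p$ and any word $u\neq\rho_{p,n}$, we have $u\not\sim_2\rho_{p,n}$. The plan is to derive this immediately from Proposition~\ref{mainlemma} by contraposition. The whole argument rests on one observation: over the ternary alphabet, $2$-binomial equivalence forces \emph{equality} of all the relevant binomial coefficients, whereas Proposition~\ref{mainlemma} guarantees that equality cannot hold across the board for a Parikh-equivalent word distinct from $\rho_{p,n}$.

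First I would unwind what $u\sim_2\rho_{p,n}$ means concretely. By definition of $\sim_2$, we would have $\binom{u}{x}=\binom{\rho_{p,n}}{x}$ for every word $x$ of length at most $2$. Taking $x$ of length $1$ gives $|u|_a=|\rho_{p,n}|_a$ for each letter $a$, i.e.\ $\Psi(u)=\Psi(\rho_{p,n})$. Taking $x\in\{12,23,31\}$ gives in particular the equalities
\begin{equation*}
\binom{u}{12}=\binom{\rho_{p,n}}{12},\qquad
\binom{u}{23}=\binom{\rho_{p,n}}{23},\qquad
\binom{u}{31}=\binom{\rho_{p,n}}{31}.
\end{equation*}
In particular each of these is $\ge$ the corresponding value for $\rho_{p,n}$, so the last three bullet points of Proposition~\ref{mainlemma} all hold, and the second bullet $\Psi(u)=\Psi(\rho_{p,n})$ holds as well.

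Now I would argue by contradiction. Suppose $u\neq\rho_{p,n}$ yet $u\sim_2\rho_{p,n}$. Then the first bullet of Proposition~\ref{mainlemma}, namely $u\neq\rho_{p,n}$, is true, and by the preceding paragraph the remaining four bullets are all true as well. Thus all five statements of Proposition~\ref{mainlemma} would be simultaneously true, contradicting the conclusion of that proposition that at least one of them must be false. Hence no such $u$ exists, which is exactly the assertion $u\not\sim_2\rho_{p,n}$ for every $u\neq\rho_{p,n}$.

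There is essentially no obstacle here: the Corollary is a direct logical consequence of Proposition~\ref{mainlemma} once one records that $\sim_2$ yields equalities (hence, a fortiori, the $\ge$ inequalities) of the three mixed binomial coefficients together with Parikh equality. All the genuine combinatorial work—showing that some coefficient must strictly drop for a distinct Parikh-equivalent word—lives in Proposition~\ref{mainlemma} itself; the Corollary merely translates ``not all five hold'' into ``$\sim_2$ fails.'' The only point worth stating explicitly is that over $\{1,2,3\}$ the equivalence $\sim_2$ is captured by the first- and second-order coefficients, so that checking $\Psi$ and the three coefficients $\binom{\cdot}{12},\binom{\cdot}{23},\binom{\cdot}{31}$ suffices to detect $\sim_2$-inequivalence.
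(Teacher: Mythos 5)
Your proof is correct and matches the paper's intent exactly: the paper treats Corollary~\ref{cor:difu} as an immediate consequence of Proposition~\ref{mainlemma}, and your argument—observing that $u\sim_2\rho_{p,n}$ forces $\Psi(u)=\Psi(\rho_{p,n})$ and equality (hence the inequalities $\ge$) of the coefficients $\binom{\cdot}{12},\binom{\cdot}{23},\binom{\cdot}{31}$, so that all five bullets of the proposition would hold simultaneously—is precisely the contraposition the authors had in mind.
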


\begin{proof}[Proof of Proposition~\ref{mainlemma}]
Let us show the proposition by induction on $n$.
The result clearly holds for $n\le3$. Let $n\ge4$ be an integer such that the result holds for any $i<n$.
Now let us proceed by contradiction to show that the result also holds for $n$.

For the sake of contradiction, let $p$ and $u$ be such that
\begin{align}
 u&\not=\rho_{p,n},\label{eqdiff}\\
  \Psi(u)&=\Psi(\rho_{p,n}),\label{eqabel}\\
  \binom{u}{12}&\ge\binom{\rho_{p,n}}{12},\label{eq01}\\
  \binom{u}{23}&\ge\binom{\rho_{p,n}}{23},\label{eq12}\\
  \binom{u}{31}&\ge\binom{\rho_{p,n}}{31}.\label{eq20}
\end{align}
Let $u=vw$ where $v$ is the prefix of length $p+s_{n-1}-\sqrt{\frac{s_{n-1}}{2}}$ of $u$. Similarly, let $\rho_{p,n}=v'w'$ where $|v| = |v'|$.
In the first part of the proof, we show that $\Psi(v) = \Psi(v')$ and more precisely $|v|_1= p = |v'|_1$, $|v|_3=0=|v'|_3$. We proceed into three steps.

\smallskip

$\bullet$ Proof of $|v|_1\ge p$:

For the sake of contradiction, suppose $|v|_1\le p-1$. Then
\begin{align*}
\binom{u}{12}&=\binom{v}{12}+\binom{w}{12}+ |v|_1|w|_2 \\
&\le |v|_1|v|_2+|w|_1|w|_2+|v|_1|w|_2 \\
&\le |v|_1|u|_2+|w|_1|w|_2 \\
&\le (p-1)|u|_2+ |w|^2.
\end{align*}
Replacing $|w|$ by its value, we get
\begin{align}
\binom{u}{12} &\le p|u|_2+ \left(\sqrt{\frac{s_{n-1}}{2}}+\sum_{i=1}^{n-2}s_i\right)^2-|u|_2\label{eqinter1}.
\end{align}
By \eqref{eqabel}, $|u|_2=|\rho_{p,n}|_2$ and condition \eqref{d1} implies
$$0>\left(\sqrt{\frac{s_{n-1}}{2}}+\sum_{i=1}^{n-2}s_i\right)^2-s_{n-1}\ge\left(\sqrt{\frac{s_{n-1}}{2}}+\sum_{i=1}^{n-2}s_i\right)^2-|u|_2.$$
 Together with \eqref{eqinter1}, it gives $\binom{u}{12}< p|\rho_{p,n}|_2\leq \binom{\rho_{p,n}}{12}$.
This is a contradiction with hypothesis \eqref{eq01} and we conclude that $|v|_1\ge p$.

\smallskip

$\bullet$ Proof of $|v|_3=0$:

For the sake of contradiction, suppose $|v|_3\ge 1$. 
\begin{align}
\binom{u}{23}&= |u|_2|u|_3 - \binom{u}{32}\nonumber \\ 
&= |u|_2|u|_3-\binom{v}{32} - \binom{w}{32} - |v|_3 |w|_2 \nonumber \\
&\le |u|_2|u|_3-|v|_3|w|_2\nonumber \\
&\le |u|_2|u|_3-|w|_2. \label{eq:23}
\end{align}
Observe that
\begin{align*} 
 |w|_2 &= |u|_2-|v|_2 \\
 &=|u|_2-|v|+|v|_1+|v|_3 \\
 &> |u|_2-|v|+|v|_1
 \end{align*}
 and 
 \begin{align*}
|u|_2-|v|+|v|_1 
 &\geq |u|_2 - p - s_{n-1} + \sqrt{\frac{s_{n-1}}{2}} + |v|_1  \\
 &\geq (|u|_2 - s_{n-1}) + (|v|_1 - p) + \sqrt{\frac{s_{n-1}}{2}}  \\
 &\geq \sqrt{\frac{s_{n-1}}{2}}\,.
 \end{align*}
 Moreover, by \eqref{eqabel}, $|u|_2|u|_3=|\rho_{p,n}|_2|\rho_{p,n}|_3$. We can use these two remarks in inequality~\eqref{eq:23}.
\begin{align*}
\binom{u}{23}&< |\rho_{p,n}|_2|\rho_{p,n}|_3-\sqrt{\frac{s_{n-1}}{2}}\\
  &<|\rho_{p,n}|_2 |\rho_{p,n}|_3 -\left(\sum_{i=1}^{n-2}s_{i}\right)\left(\sum_{i=1}^{n-4}s_{i}\right) \,\,\text{(from \eqref{d3})}\\
 &< |\rho_{p,n}|_2 |\rho_{p,n}|_3 -\sum_{i=0}^{\left\lfloor \frac{n-3}{3}\right\rfloor}\left( s_{n-2-3i} \sum_{j=i}^{\left\lfloor \frac{n-5}{3}  \right\rfloor}s_{n-4-3j}\right)
 \end{align*}
The latter quantity is equal to $|\rho_{p,n}|_2 |\rho_{p,n}|_3-\binom{\rho_{p,n}}{32}$ and thus 
$$
\binom{u}{23} < \binom{\rho_{p,n}}{23}.
$$
This contradicts hypothesis \eqref{eq12} and we conclude that $|v|_3=0$.

\smallskip

$\bullet$ Proof of $|v|_1\le p$:

For the sake of contradiction, suppose $|v|_1\ge p+1$. Then
$$\binom{u}{13}\ge |v|_1|w|_3\ge p|w|_3+ |w|_3.$$
Since $|v|_3=0$, $|w|_3=|u|_3=|\rho_{p,n}|_3\ge s_{n-2} > \sqrt{\frac{s_{n-2}}{2}}$.
From condition \eqref{d3}, taking into account the structure of $\rho_{p,n}$, we deduce
$$\binom{u}{13}> p|\rho_{p,n}|_3 +\left(\sum_{i=1}^{n-3}s_i \right)\left(\sum_{i=1}^{n-5}s_i \right)\ge\binom{\rho_{p,n}}{13}.$$
This yields a contradiction with hypothesis \eqref{eq20}. We conclude that $|v|_1= p$. We thus have 
\begin{equation}
    \label{eq:psi13}
\Psi(v')=\Psi(v) \text{ and }\Psi(w')=\Psi(w).    
\end{equation}

We will now use it to find the contradiction. From hypothesis \eqref{eq01}, we get
\begin{align}
	\binom{w}{12} &= \binom{u}{12} - \binom{v}{12} - |v|_1 |w|_2 \nonumber \\
&\ge\binom{\rho_{p,n}}{12} - \binom{v}{12} - |v|_1 |w|_2 \nonumber \\
&\ge\binom{w'}{12}+\binom{v'}{12}-\binom{v}{12}+ |v'|_1|w'|_2-|v|_1|w|_2 \nonumber \\
&\ge\binom{w'}{12}+\binom{v'}{12}-\binom{v}{12} \label{eq:discv}
\end{align}
where the last inequality is due to \eqref{eq:psi13}.

Since $v'$ is of the form $1^\alpha 2^\beta$,
$$
\binom{v'}{12} = \max \left\{ \binom{x}{12} \,: \, \Psi(x) = \Psi(v') \right\} \ge \binom{v}{12}
$$
and we get $\binom{w}{12} \geq \binom{w'}{12}$. With similar arguments and the fact that $|v|_3 = 0 = |v'|_3$, we obtain 
$\binom{w}{23} \geq \binom{w'}{23}$ and $\binom{w}{31} \ge \binom{w'}{31}$.

Observe that $w \neq w'$. Indeed, it is obvious if $v = v'$ (since $vw \neq v'w'$) and otherwise, $\binom{v'}{12} > \binom{v}{12}$ and from \eqref{eq:discv}, we get $\binom{w}{12} > \binom{w'}{12}$.

Let $\sigma$ be the morphism such that $\sigma(1)=3;\sigma(2)=1;\sigma(3)=2.$ Then $\sigma(w')=\rho_{ \sqrt{\frac{s_{n-1}}{2}},n-1}$ and since $\sigma$ is a permutation of the alphabet, we get
 \begin{itemize}
  \item $\sigma(w) \not= \sigma(w')$,
  \item $\Psi(\sigma(w))=\Psi(\sigma(w'))$,
  \item  $\binom{\sigma(w)}{12}\ge\binom{\sigma(w')}{12}$,
  \item  $\binom{\sigma(w)}{23}\ge\binom{\sigma(w')}{12}$,
  \item   $\binom{\sigma(w)}{31}\ge\binom{\sigma(w')}{31}$.
 \end{itemize}
This is a contradiction with our induction hypothesis.
We deduce that there is no such pair of integers and this concludes the proof of the proposition.
\end{proof}

\subsection{Unboundedness}\label{subs:notBounded}
It remains us to prove that the language $\{\rho_{p,n} : p,n \in \N\}$ is not bounded. We will make use of the following notation. For all non-empty words $w \in \alphabet^+$, its \textit{letter-factorization} is $(c_1, q_1), \ldots, (c_{r},q_{r})$, where
$$
w = c_1^{q_1} c_2^{q_2} \cdots c_{r}^{q_{r}},
$$
$r \geq 1$, $c_1, \ldots, c_{r}$ are letters such that for all $i$, $c_i \neq c_{i+1}$, and where $q_1, \ldots, q_{r}$ are positive integers.  The \textit{number of blocks} in the word $w$, denoted by $nb(w)$, is $r$. It corresponds to the length of the decomposition.

\begin{Example}
Let $w = 112333122132$. We have $c_1 = 1$, $c_2 = 2$, $c_3 = 3$, $c_4 = 1$, $c_5 = 2$, $c_6 = 1$, $c_7 = 3$, $c_8 = 2$, and $q_1 = 2$, $q_2 = 1$, $q_3 = 3$, $q_4 = 1$, $q_5 = 2$, $q_6=q_7=q_8 =1$. Moreover, $nb(w) = 8$.
 \end{Example}
 
The letter-factorization of a word of the form $\rho_{p,n}$ has particular properties that we record in the following remark.

\begin{Remark}\label{rem:rho}
For all $p,n \in \N$, if $(c_1, q_1), \ldots, (c_{r},q_{r})$ is the letter-factorization of $\rho_{p,n}$, we know that
\begin{itemize}
\item[$\bullet$]
for all $i \geq 1$, $c_i \equiv i \pmod{3}$, with $c_i \in \{1,2,3\}$;
\item[$\bullet$]
$q_1 = p $ and for all $i > 1$, $q_i = s_{n-i+1}$;
\item[$\bullet$]
$nb(\rho_{p,n}) = n$.
\end{itemize}
\end{Remark}

\begin{Lemma}\label{lem:notBounded}
For all $\ell \in \N$ and words $w_1, \ldots, w_\ell \in \alphabet^*$, we have
$$
\{ \rho_{p,n} : p,n \in \N \} \not\subset w_1^* \cdots w_\ell^*.
$$
\end{Lemma}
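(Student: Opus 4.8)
The plan is to argue by contradiction using the block structure (letter-factorization) of the words $\rho_{p,n}$. Suppose there exist $\ell$ and words $w_1, \ldots, w_\ell$ with $\{\rho_{p,n} : p,n \in \N\} \subseteq w_1^* \cdots w_\ell^*$, and set $R = \max_{1 \le i \le \ell} nb(w_i)$, a constant depending only on the $w_i$. By Remark~\ref{rem:rho}, $\rho_{p,n}$ has exactly $n$ blocks, and its blocks $B_2, \ldots, B_n$ have lengths $s_{n-1} > s_{n-2} > \cdots > s_1$; for the explicit choice $s_n = 2 \times 8^{8^n}$ the sequence $(s_n)$ is strictly increasing, so these $n-1$ lengths are pairwise distinct (only the strict monotonicity of $(s_n)$ is needed here, not conditions \eqref{d1}--\eqref{d3}). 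The goal is then to show that membership in $w_1^* \cdots w_\ell^*$ forces $n$ to be bounded by a constant depending only on $w_1, \ldots, w_\ell$, contradicting the fact that $n$ ranges over all of $\N$.

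The core observation concerns powers: if $w$ is a word with $nb(w) = r$, then for every $k \ge 1$ the sequence of block lengths of $w^k$ is, away from its two extreme blocks, periodic with period at most $r$. Indeed, writing $w = c_1^{q_1} \cdots c_r^{q_r}$, if $c_1 \neq c_r$ then no merging occurs at the junctions and the block-length sequence of $w^k$ is $(q_1, \ldots, q_r)$ repeated $k$ times; if $c_1 = c_r$ the junctions merge and the interior is periodic of period $r - 1$. In either case, among any $2R + 2$ consecutive blocks of $w^k$ at least two have equal length. Now factor $\rho_{p,n} = w_1^{k_1} \cdots w_\ell^{k_\ell}$: the $\ell$ pieces partition the word and create at most $\ell - 1$ internal cut points, each lying in the interior of at most one block of $\rho_{p,n}$; hence at most $\ell - 1$ blocks are split between two pieces, and each of the remaining blocks lies entirely inside a single piece, where, by maximality, it is genuinely a block of that piece $w_i^{k_i}$.

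It remains to bound the number of whole blocks inside one piece. Consecutive blocks of $\rho_{p,n}$ lying inside piece $i$ are consecutive blocks of $w_i^{k_i}$, so if more than $2R + 2$ of them occur among $B_2, \ldots, B_n$, the periodicity above forces two of them to share a length, contradicting the pairwise distinctness of $s_{n-1}, \ldots, s_1$. Thus each piece contains at most $2R + 2$ of the blocks $B_2, \ldots, B_n$, and summing over the $\le \ell$ pieces together with the $\le \ell - 1$ split blocks yields $n - 1 \le \ell(2R + 2) + (\ell - 1)$. Since this bounds $n$ by a constant independent of $p$ and $n$, choosing $n$ larger produces the desired contradiction. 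The main obstacle is the bookkeeping in the core observation: one must treat carefully the merging of runs at the junctions between consecutive copies of $w_i$ (the case $c_1 = c_r$) and the truncation of runs at the cut points between pieces, so as to guarantee that the full blocks of $\rho_{p,n}$ counted inside a piece really correspond to consecutive blocks in the genuinely periodic interior of $w_i^{k_i}$.
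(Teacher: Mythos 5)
Your plan is correct, but it takes a genuinely different route from the paper's proof. The paper also argues by contradiction from the block structure recorded in Remark~\ref{rem:rho}, but its two key mechanisms are different: first, every $w_i$ with $nb(w_i)\ge 3$ is eliminated by observing that its second block $a_2^{\alpha_2}$ would have to coincide exactly with a full block $c_\gamma^{q_\gamma}$ of $\rho_{p,n}$ with $\gamma\le 2i$, and $q_\gamma=s_{n-\gamma+1}$ eventually exceeds the fixed $\alpha_2$; second, once every remaining $z_j$ satisfies $nb(z_j)\le 2$, the cyclic letter pattern $c_i\equiv i\pmod 3$ forbids $z_jz_j$ from being a factor of any $\rho_{p,n}$ when $nb(z_j)=2$, so each such exponent is at most $1$ and $nb(\rho_{p,n})\le 2q$. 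You instead exploit the pairwise distinctness of the block lengths $s_{n-1}>\dots>s_1$ together with the eventual periodicity, with period at most $nb(w_i)$, of the block-length sequence of a power $w_i^{k_i}$; this yields a single uniform count with no elimination step and no use of the mod-$3$ letter pattern, so it is somewhat more robust (it applies to any family with unboundedly many blocks of pairwise distinct lengths). The price is exactly the bookkeeping you flag: the merging of runs at junctions when the first and last letters of $w_i$ coincide, the truncated runs at the at most $\ell-1$ cut points, and the check that a block of $\rho_{p,n}$ lying wholly inside one piece is a maximal run of that piece --- all of which do go through (with period at most $r$ the sharp threshold is $r+2$ consecutive blocks, so your $2R+2$ is a safe if slightly generous constant). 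Both arguments are valid.
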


\begin{proof}
For the sake of contradiction, let us assume that there exist $\ell \in \N$ and words $w_1, \ldots, w_\ell \in \alphabet^*$ such that
$$
\mathcal{R} := \{ 1^p 2^{s_{n-1}} 3^{s_{n-2}} \cdots : p \in \N, n \in \N \} \subseteq w_1^* \cdots w_\ell^*.
$$
We will first show that, under this assumption, there exist $N \in \N$ and words $z_1,\ldots,z_q$ such that, for all $i$, $nb(z_i) \leq 2$, and the subset of $\mathcal{R}$
$$
\mathcal{R}_{N} := \{ \rho_{p,n} : p \in \N, n \geq N \}
$$
 is included in $z_1^* \ldots z_{q}^*$.

Let us take the least $i \in \{1, \ldots, \ell \}$ such that $nb(w_i) \geq 3$. If such an $i$ does not exist, we can take $N = 0$, $\ell=q$ and $z_i=w_i$ for all $i$.  
Otherwise, the letter-factorization of $w_i$ begins with $(a_1, \alpha_1), (a_2, \alpha_2), (a_3, \alpha_3)$. Assume that there exist $p,n$ such that the factorization of $\rho_{p,n}$ in terms of $w_1, \ldots, w_\ell$ 
$$\rho_{p,n}=w_1^{n_1}\cdots w_i^{n_i}\cdots w_\ell^{n_\ell}$$ 
contains an occurrence of $w_i$, i.e., $n_i>0$, (if this is not the case, $\mathcal{R}$ is thus included in $w_1^*\cdots w_{i-1}^* w_{i+1}^*\cdots w_\ell^*$ and we can proceed to the next index such that $nb(w_i) \geq 3$).
Because of Remark~\ref{rem:rho}, if $nb(w_j) = 2$ then $w_jw_j$ is never a factor of a word in $\mathcal{R}$ (this would mean that two letters out of three are alternating). In that case, we must have $n_j=1$ in the above factorization. Also, if $nb(w_j)=1$, then $nb(w_{j}^{n_j})=1$. By definition of $i$, $nb(w_j)\le 2$ for all $j<i$. 
Therefore there exists $\gamma \leq 2i$ such that, if $(c_1, q_1), \ldots, (c_r,q_r)$ is the letter-factorization of $\rho_{p,n}$, 
\begin{itemize}
\item[$\bullet$]
$c_1^{q_1} \cdots c_{\gamma - 1}^{q_{\gamma - 1}} \in w_1^* \cdots w_{i-1}^*  a_1^{\alpha_1}$,
\item[$\bullet$]
$c_\gamma = a_2$ and $q_{\gamma} = \alpha_2$,
\item[$\bullet$]
$c_{\gamma +1} = a_3$.
\end{itemize}
See Figure~\ref{fig:rho} for an illustration.

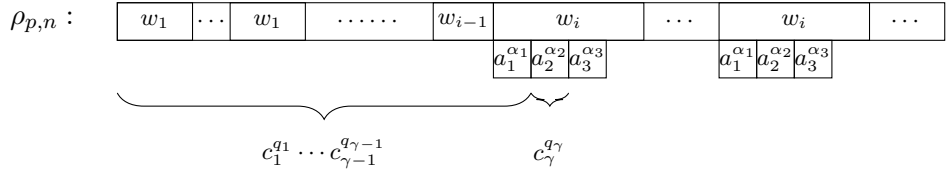
\begin{figure}[h!]
\begin{center}
\begin{tikzpicture}
\draw (0,0) rectangle (11,0.5);
\draw (0,0) rectangle (1,0.5);
\draw (1.5,0) rectangle (2.5,0.5);
\node (0) at (0.5,0.25) {\footnotesize{$w_1$}};
\node (1) at (1.275,0.25) {\footnotesize{$\cdots$}};
\node (2) at (2,0.25) {\footnotesize{$w_1$}};
\node (4) at (3.35,0.25) {\footnotesize{$\cdots\cdots$}};
\draw (4.2,0) rectangle (5,0.5);
\draw (5,0) rectangle (7,0.5);
\draw (8,0) rectangle (10,0.5);
\node (31) at (4.6,0.25) {\footnotesize{$w_{i-1}$}};
\node (4) at (6,0.25) {\footnotesize{$w_i$}};
\node (5) at (7.5,0.25) {\footnotesize{$\cdots$}};
\node (6) at (9,0.25) {\footnotesize{$w_i$}};
\node (7) at (10.5,0.25) {\footnotesize{$\cdots$}};

\draw (5,0) rectangle (5.5,-0.5);
\draw (5.5,0) rectangle (6,-0.5);
\draw (6,0) rectangle (6.5,-0.5);
\draw (8,0) rectangle (8.5,-0.5);
\draw (8.5,0) rectangle (9,-0.5);
\draw (9,0) rectangle (9.5,-0.5);

\node (8) at (5.25,-0.25) {\footnotesize{$a_1^{\alpha_1}$}};
\node (9) at (5.75,-0.25) {\footnotesize{$a_2^{\alpha_2}$}};
\node (10) at (6.25,-0.25) {\footnotesize{$a_3^{\alpha_3}$}};
\node (11) at (8.25,-0.25) {\footnotesize{$a_1^{\alpha_1}$}};
\node (12) at (8.75,-0.25) {\footnotesize{$a_2^{\alpha_2}$}};
\node (13) at (9.25,-0.25) {\footnotesize{$a_3^{\alpha_3}$}};

\draw [decorate,decoration={brace,amplitude=10pt,mirror}]
(0,-0.7) -- (5.5,-0.7);
\node (14) at (2.75,-1.5) {\footnotesize{$c_1^{q_1} \cdots c_{\gamma - 1}^{q_{\gamma-1}}$}};

\draw [decorate,decoration={brace,amplitude=5pt,mirror}]
(5.5,-0.7) -- (6,-0.7);

\node (15) at (5.75,-1.5) {\footnotesize{$c_{\gamma}^{q_{\gamma}}$}};

\node (16) at (-1,0.25) {$\rho_{p,n} : $};

\end{tikzpicture}
\end{center}
\caption{Decomposition of $\rho_{p,n}$ into blocks}
\label{fig:rho}
\end{figure}

With Remark~\ref{rem:rho}, we know that if $\gamma = 2$ then $q_{\gamma-1}=p$ and in all cases, $q_\gamma = s_{n-\gamma+1}$. Therefore, if we take $N$ such that $s_{N-2i+1} > \alpha_2$, the set 
$\mathcal{R}_{N}$, which is included in $w_1^* \ldots w_\ell^*$ is also included in $w_1^* \ldots w_{i-1}^* w_{i+1}^* \ldots w_\ell^*$. We can proceed the same way to eliminate other factors $w_j$ with $nb(w_j) \geq 3$ to finally obtain an integer $N$ such that 
$$
\mathcal{R}_{N} = \{ \rho_{p,n} : p \in \N, n \geq N \}
$$
is included in a set of the form $z_1^* \ldots z_{q}^*$ where, for all $i$, $nb(z_i) \leq 2$.

It remains to show that this observation leads to a contradiction. Let $\rho_{p,n}\in\mathcal{R}_{N}$. It can be factorized as $z_1^{n_1}\cdots z_{q}^{n_{q}}$. We have already observed that if $nb(z_i)=2$, then $n_i=1$. Otherwise, $nb(z_i)=1$ and thus $nb(z_{i}^{n_i})=1$. For this reason, we obtain that for all $n \geq N$, 
$$
nb(\rho_{p,n}) \leq 2q,
$$
which is a contradiction because $nb(\rho_{p,n})=n$ and this concludes the proof.
\end{proof}

\section{Conclusions}

As we have seen, there is a simple switch operation given by $\equiv_2$ that permits us to easily describe the $2$-binomial equivalence class of a word over a binary alphabet. 
One could try to generalize this operation over larger alphabets or for $k \geq 3$, but the question has no clear answer yet.

However, over a larger alphabet, we gave algorithmic and algebraic descriptions of the $2$-binomial classes. A natural question is to extend these results for $k \geq 3$.

We proved that $\LL(\sim_k,\alphabet)$ is not context-free if $k \geq 2$ and $\# \alphabet \geq 3$. We know that $\LL(\sim_2,\{1,2\})$ is context-free. However, the question is still open about $\LL(\sim_k,\{1,2\})$ with $k \geq 3$. It seems that a method similar to the one carried in Section~\ref{sec:5} could work, but it remains to find an unbounded set of singletons.

When $\LL(\sim_k,\alphabet)$ is not context-free, a measure of descriptional complexity is the so-called automaticity \cite{automaticity}. 
Let $L$ be a language and $C,t$ be integers. The idea is that we only know the words of $L$ of length at most $C$. Consider the following approximation of Nerode congruence: for any two words $u,v$ such that $|u|,|v| \leq t$,
$$
u \approx_{L,C,t} v \; \Leftrightarrow  \; \left( u^{-1} \left(L \cap \alphabet^{\leq C} \right)  \right) \cap \alphabet^{\leq C-t} = \left( v^{-1} \left(L \cap \alphabet^{\leq C} \right)  \right) \cap \alphabet^{\leq C-t}.
$$
The quantity $\# \left(\alphabet^{\leq  t}  /\!\approx_{L,C,t} \right)$ gives a lower approximation of the automaticity of $L$.
For $L = \LL(\sim_3, \{1,2\})$, $C=15$ and $t= 1, 2, \ldots, 9$, the first few values are
$$
1,3,5,9,16,27,49,88,154.
$$
For $L = \LL(\sim_2, \{1,2,3\})$, $C=9$ and $t= 1, 2, \ldots, 6$, they are
$$
1,4,8,19,42,62.
$$
Can the automaticity of such languages be characterized or estimated?

\end{document}